\tikzstyle{every picture}+=[remember picture]
\newtheorem{theorem}{Theorem}
\newtheorem{lemma}{Lemma}
\newtheorem{corollary}{Corollary}
\newcommand{\qed}{\hfill\ensuremath{\Box}\medskip\\\noindent}
\newenvironment{proof}{\noindent\emph{Proof. }}
\newcommand{\fingerprintq}{\ensuremath{\textsc{Fingerprint}}}
\newcommand{\lceq}{\ensuremath{\textsc{LCE}}}
\newcommand{\levelanc}{\ensuremath{\textsc{LA}}}
\renewcommand{\succ}{\ensuremath{\textsc{succ}}}
\newcommand{\pred}{\ensuremath{\textsc{pred}}}
\newcommand{\fp}{\ensuremath\phi}
\newcommand{\fpplus}{\ensuremath{\oplus}}
\newcommand{\fpdelsuffix}{\ensuremath{\ominus_s}}
\newcommand{\fpdelprefix}{\ensuremath{\ominus_p}}
\newcommand{\size}{\ensuremath{\mathit{size}}}
\newcommand{\depth}{\ensuremath{\mathit{depth}}}
\newcommand{\leaf}{\ensuremath{\mathit{leaf}}}
\newcommand{\heavypath}{\ensuremath{H}}
\newcommand{\rootnode}{\ensuremath{\mathit{root}}}
\newcommand{\lchild}{\ensuremath{\mathit{left}}}
\newcommand{\rchild}{\ensuremath{\mathit{right}}}
\newcommand{\leftnodes}{\ensuremath{V}} %V
\newcommand{\leftsize}{\ensuremath{L}} %L
\newcommand{\leftstring}{\ensuremath{P}} %P
\newcommand{\str}{\ensuremath{S} }
\newcommand{\slp}{\ensuremath{G} }
\newcommand{\lslp}{\ensuremath{G_L} }
\newcommand{\lce}{\ensuremath\ell}
\title{Fingerprints in Compressed Strings\footnote{An extended abstract of this paper appeared at the 13th Algorithms and Data Structures Symposium.}}
\author{Philip Bille \\ \texttt{phbi@dtu.dk} \and Patrick Hagge Cording \\ \texttt{phaco@dtu.dk} \and Inge Li G{\o}rtz\thanks{Supported by a grant from the Danish Council for Independent Research $\vert$ Natural Sciences.} \\ \texttt{inge@dtu.dk} \and Benjamin Sach \\ \texttt{sach@dcs.warwick.ac.uk} \and Hjalte Wedel Vildh{\o}j \\ \texttt{hwvi@dtu.dk} \and S{\o}ren Vind\thanks{Supported by a grant from the Danish National Advanced Technology Foundation.} \\ \texttt{sovi@dtu.dk}}
\begin{document}
	
\maketitle

\begin{abstract}
\noindent The Karp-Rabin fingerprint of a string is a type of hash value that due to its strong properties has been used in many string algorithms. In this paper we show how to construct a data structure for a string $S$ of size $N$ compressed by a context-free grammar of size $n$ that answers fingerprint queries. That is, given indices $i$ and $j$, the answer to a query is the fingerprint of the substring $S[i,j]$. We present the first $O(n)$ space data structures that answer fingerprint queries without decompressing any characters. For Straight Line Programs (SLP) we get $O(\log N)$ query time, and for Linear SLPs (an SLP derivative that captures LZ78 compression and its variations) we get $O(\log \log N)$ query time. Hence, our data structures has the same time and space complexity as for random access in SLPs. We utilize the fingerprint data structures to solve the longest common extension problem in query time $O(\log N\log \lce)$ and $O(\log \lce \log\log \lce + \log\log N)$ for SLPs and Linear SLPs, respectively. Here, $\lce$ denotes the length of the LCE.
\end{abstract}

\section{Introduction}
%Given a string \str of size $N$ compressed by a context-free grammar \slp of size $n$ and a Karp-Rabin fingerprint function $\fp$, the answer to a $\fingerprintq(i, j)$ query is the fingerprint $\fp(\str[i, j])$ of the substring $\str[i, j]$.
Given a string \str of size $N$ and a Karp-Rabin fingerprint function $\fp$, the answer to a $\fingerprintq(i, j)$ query is the fingerprint $\fp(\str[i, j])$ of the substring $\str[i, j]$. We consider the problem of constructing a data structure that efficiently answers fingerprint queries when the string is compressed by a context-free grammar of size $n$.

The fingerprint of a string is an alternative representation that is much shorter than the string itself. By choosing the fingerprint function randomly at runtime it exhibits strong guarantees for the probability of two different strings having different fingerprints. Fingerprints were introduced by Karp and Rabin~\cite{karp1987efficient} and used to design a randomized string matching algorithm. Since then, they have been used as a central tool to design algorithms for a wide range of problems (see e.g.,~\cite{amir1992efficient, andoni2006efficient, cole2003faster, cormode2005substring, cormode2007string, farach1998string, gasieniec1996randomized, kalai2002efficient, porat2009exact}). % (e.g. \cite{bille12lce, ukkonen1992approximate}). 
%Researchers are currently concerned with the generalization of many such string problems to compressed data, as the continuing growth in amounts of data, for instance DNA databases, requires that it is stored in compressed form. In particular, grammar-based compression is studied because it models many commonly used compression schemes with little overhead~\cite{charikar2005smallest, rytter2003application}. 

A fingerprint requires constant space and it has the useful property that given the fingerprints $\fp(\str[1, i-1])$ and $\fp(\str[1, j])$, the fingerprint $\fp(\str[i, j])$ can be computed in constant time. By storing the fingerprints $\fp(\str[1, i])$ for $i=1\ldots N$ a query can be answered in $O(1)$ time. However, this data structure uses $O(N)$ space which can be exponential in $n$. Another approach is to use the data structure of G\c{a}sieniec~et~al.~\cite{gasieniec2005real} which supports linear time decompression of a prefix or suffix of the string generated by a node. To answer a query we find the deepest node that generates a string containing $S[i]$ and $S[j]$ and decompress the appropriate suffix of its left child and prefix of its right child. Consequently, the space usage is $O(n)$ and the query time is $O(h+j-i)$, where $h$ is the height of the grammar. The $O(h)$ time to find the correct node can be improved to $O(\log N)$ using the data structure by Bille et al.~\cite{bille2011random} giving $O(\log N + j-i)$ time for a $\fingerprintq(i, j)$ query. Note that the query time depends on the length of the decompressed string which can be large.

We present the first data structures that answers fingerprint queries on grammar compressed strings without decompressing any characters, and improve all of the above time-space trade-offs. Assume without loss of generality that the compressed string is given as a Straight Line Program (SLP). An SLP is a grammar in Chomsky normal form, i.e., each nonterminal has exactly two children. A Linear SLP is an SLP where the root is allowed to have more than two children, and for all other internal nodes, the right child must be a leaf. Linear SLPs capture the LZ78 compression scheme~\cite{lz78} and its variations. Our data structures give the following theorem.

\begin{theorem}\label{thm:fp}
	Let $S$ be a string of length $N$ compressed into an SLP $\slp$ of size~$n$. We can construct data structures that support $\fingerprintq$ queries in:
	\begin{enumerate}
		\item[(i)] $O(n)$ space and query time $O(\log N)$
		\item[(ii)] $O(n)$ space and query time $O(\log \log N)$ if $\slp$ is a Linear SLP
	\end{enumerate}
\end{theorem}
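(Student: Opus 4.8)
\noindent The plan is to reduce everything to \emph{prefix} fingerprint queries, i.e.\ to computing $\fp(S[1,k])$ for a given index $k$. The Karp--Rabin fingerprint supports concatenation and the inverse split operation in $O(1)$ time, so from $\fp(S[1,j])$, $\fp(S[1,i-1])$ and the corresponding lengths one recovers $\fp(S[i,j])$ in constant time; hence a $\fingerprintq(i,j)$ query costs two prefix queries plus $O(1)$ work, and it suffices to build, in $O(n)$ space, a structure returning $\fp(S[1,k])$.

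The second observation is that $\fp(S[1,k])$ is obtained by descending the unique parse-tree path from the root to the leaf at position $k$ and accumulating, under $\fpplus$, the fingerprint of every subtree hanging off the \emph{left} side of this path (plus the final leaf character). Storing $\size(v)$ and $\fp(\mathrm{string}(v))$ at every SLP node costs $O(n)$ space, so the only real task is to perform this descent, whose naive cost is the grammar height, within the stated bounds.

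For part (i) I would reuse the heavy-path decomposition underlying the random-access structure of Bille et al.~\cite{bille2011random}. A root-to-leaf path of the parse tree splits into $O(\log N)$ maximal heavy-path pieces, because crossing a light edge at least halves the generated length. On each heavy path $\heavypath$ I store: (a) a predecessor structure over its nodes, weighted by the sizes of the off-path children, so that on entering $\heavypath$ at its top with a residual offset one locates the node at which the descent leaves $\heavypath$ (and moves on to the top of the next heavy path); and (b) the prefix sums, under $\fpplus$, of the off-path-left-child fingerprints along $\heavypath$, so that once the exit node is known the contribution of the whole piece to $\fp(S[1,k])$ is read off in $O(1)$ time. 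Using a size-weighted search here --- rather than a plain binary search on each of the $\Theta(\log N)$ pieces, which would only give $O(\log^2 N)$ --- makes a single step down a heavy path cost $O(\log(m/m'))$, where $m,m'$ are the generated lengths of the current and the next heavy-path entry nodes; these costs telescope to $O(\log N)$ overall. All auxiliary structures are linear in the number of SLP nodes, so the space is $O(n)$. I expect this telescoping (together with arranging the fingerprint prefix sums so that the fingerprint bookkeeping stays $O(1)$ per piece) to be the main point of part (i).

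For part (ii) a Linear SLP is essentially a trie: every non-root internal node $v$ has its right child a leaf, so $\mathrm{string}(v)=\mathrm{string}(u)\,c$ for its left child $u$ and a terminal $c$; thus the strings generated by nodes are closed under deleting the last character, the prefixes of $\mathrm{string}(v)$ of all lengths are exactly the strings generated by the ancestors of $v$ along its left spine, and that spine has length $\size(v)$. The root has children $X_1,\dots,X_m$ with $S=\mathrm{string}(X_1)\cdots\mathrm{string}(X_m)$. To compute $\fp(S[1,k])$ I would: (1) with one predecessor search over the prefix sums of $\size(X_j)$ --- in $O(\log\log N)$ time and $O(m)=O(n)$ space using a y-fast trie --- find the child $X_j$ containing position $k$ and the residual offset $p$; (2) look up $\fp(\mathrm{string}(X_1)\cdots\mathrm{string}(X_{j-1}))$ in a precomputed $\fpplus$-prefix-sum array over the root's children; (3) by a level-ancestor query ($O(1)$ time, $O(n)$ space) find the spine node of $X_j$ at depth $p$, whose generated string is the length-$p$ prefix of $\mathrm{string}(X_j)$ and whose fingerprint is precomputed for every node; and (4) combine the two fingerprints with $\fpplus$. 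The total is $O(\log\log N)$ time and $O(n)$ space, and the two-endpoint reduction to prefix queries preserves these bounds.
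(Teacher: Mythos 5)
Your proposal is correct and takes essentially the same route as the paper: for (i) you store per-node cumulative fingerprints along heavy paths and use a size-weighted (biased) search so the navigation costs telescope to $O(\log N)$, whereas the paper delegates the heavy-path navigation to the random-access structure of Bille et al.~\cite{bille2011random} as a black box and phrases the cumulative fingerprint as $\fp(\leftstring(v))$ (a suffix-sum dual of your prefix sums, extracted with one $\fpdelsuffix$ per heavy path); for (ii) your predecessor search over root-children prefix lengths plus a level-ancestor query on the tree formed by left-child chains is exactly the paper's dictionary tree $F$ construction. The only differences are cosmetic: the paper explicitly names $F$ and keeps the biased-search machinery hidden inside the cited random-access lemma, while you spell it out.
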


\noindent Hence, we show a data structure for fingerprint queries that has the same time and space complexity as for random access in SLPs. 

Our fingerprint data structures are based on the idea that a random access query for $i$ produces a path from the root to a leaf labelled $S[i]$. The concatenation of the substrings produced by the left children of the nodes on this path produce the prefix $S[1,i]$. We store the fingerprints of the strings produced by each node and concatenate these to get the fingerprint of the prefix instead. For \autoref{thm:fp}(i), we combine this with the fast random access data structure by Bille et al.~\cite{bille2011random}. For Linear SLPs we use the fact that the production rules form a tree to do large jumps in the SLP in constant time using a level ancestor data structure. Then a random access query is dominated by finding the node that produces $S[i]$ among the children of the root, which can be modelled as the predecessor problem.

Furthermore, we show how to obtain faster query time in Linear SLPs using finger searching techniques. Specifically, a finger for position $i$ in a Linear SLP is a pointer to the child of the root that produces $S[i]$.

%present a new finger predecessor data structure, and when using it with the fingerprint data structure for Linear SLPs we get \autoref{thm:ffp}.

\begin{theorem}\label{thm:ffp}
Let $S$ be a string of length $N$ compressed into an SLP $\slp$ of size~$n$. We can construct an $O(n)$ space data structure such that given a finger $f$ for position $i$ or $j$, we can answer a $\fingerprintq(i, j)$ query in time $O(\log \log D)$ where $D = |i - j|$.
\end{theorem}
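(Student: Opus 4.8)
The plan is to reduce a finger query to a single predecessor query that the finger makes cheap, reusing the machinery behind \autoref{thm:fp}(ii) for everything else. (The finger notion is defined for Linear SLPs, so take $\slp$ to be one.) Recall that the nonterminals of a Linear SLP form a \emph{production tree} $T$: if $X\to Ya$ then $Y$ is the parent of $X$, so a node at depth $\ell$ generates a string of length $\ell$, and the length-$\ell$ prefix of $\mathrm{string}(X)$ is generated by the depth-$\ell$ ancestor of $X$ in $T$. Let $c_1,\dots,c_k$ be the children of the root, $\ell_m=|\mathrm{string}(c_m)|$, and $L_m=\ell_1+\cdots+\ell_m$ with $L_0=0$. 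As in \autoref{thm:fp}(ii) I will store in $O(n)$ space: the values $L_m$ and the prefix fingerprints $F_m=\fp(\mathrm{string}(c_1)\cdots\mathrm{string}(c_m))$ for all $m$; the fingerprint $\fp(\mathrm{string}(X))$ at every node $X$ of $T$; and a level-ancestor structure on $T$. Then, given any position $p$ together with the index $m$ for which $L_{m-1}<p\le L_m$, I can compute $\fp(S[1,p])$ in $O(1)$ time by extending $F_{m-1}$ with the stored fingerprint at the depth-$(p-L_{m-1})$ ancestor of $c_m$ in $T$; and $\fingerprintq(i,j)$ then follows from $\fp(S[1,i-1])$ and $\fp(S[1,j])$ in $O(1)$ time. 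Since a finger for $i$ is a pointer to $c_{m_i}$ and each child can store its index and boundary positions, the whole task reduces to: \emph{given $m_i$ and the other endpoint $j$, find the index $m_j$ with $L_{m_j-1}<j\le L_{m_j}$ in $O(\log\log D)$ time}, $D=|i-j|$; assume $j\ge i$ by symmetry.

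To localize this search, note that if $j\le L_{m_i}$ then $m_j=m_i$, so assume $j>L_{m_i}$. The children strictly between $c_{m_i}$ and $c_{m_j}$ have total length $L_{m_j-1}-L_{m_i}$, which is less than $j-i=D$ because $L_{m_i}\ge i$ and $L_{m_j-1}<j$; in particular $m_j-m_i\le D$ since each child has length at least one. Also $j=i+D\le L_{m_i}+D$, and $\pred(j):=\max\{L_m:L_m<j\}$ equals $L_{m_j-1}\in[L_{m_i},L_{m_i}+D)$. Thus finding $m_j$ is a \emph{finger predecessor query} on $\{L_0,\dots,L_k\}\subseteq[0,N]$: the finger points at the present element $v=L_{m_i}$, and the answer is guaranteed within distance $D$ of $v$ in both value and rank.

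For the finger predecessor structure I will use a $y$-fast trie on $\{L_0,\dots,L_k\}$ over $[0,N]$ in which each bucket of $\Theta(\log N)$ consecutive elements is stored in a fusion tree (so predecessor within a bucket costs $O(1)$), while the $O(n/\log N)$ bucket representatives are held in an $x$-fast trie with subtree-min/max pointers; this is $O(n)$ space. The query runs in two stages. First, using the buckets of $v$ and of a constant number of its right-neighbours — a window of $\Theta(\log N)$ consecutive elements — I compute the predecessor of $j$ within that window in $O(1)$ time; if it is not cut off at the window's right end, it is $\pred(j)$. Otherwise $\pred(j)$ is separated from $v$ by $\Omega(\log N)$ elements, hence by at least one representative; consequently the binary-trie ancestor $A$ of $v$'s leaf at depth $\log N-\Theta(\log D)$, chosen large enough that its subtree (of height $O(\log D)$) contains the leaves of both $v$ and $j$ since $|v-j|\le D$, has a representative descendant and so is present in the $x$-fast trie. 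I reach $A$ from $v$ by one word operation, run the ordinary $x$-fast-trie predecessor search restricted to the $O(\log D)$ levels below $A$ — this is $O(\log\log D)$ hash lookups — to obtain the representative nearest $j$, and finish with one $O(1)$ fusion-tree search in the bucket it determines. This yields $\pred(j)$, hence $m_j$, in $O(\log\log D)$ time.

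I expect the main obstacle to be achieving $O(n)$ space and $O(\log\log D)$ time \emph{simultaneously} in the predecessor structure: a plain $y$-fast trie spends $\Theta(\log\log N)$ both in its top trie and in its bucket, and forcing either term down to $o(\log\log N)$ is exactly where the finger must be exploited. The two-stage split above is the intended resolution — the window stage eliminates the bucket cost when the answer is rank-close to the finger, and the finger lets the $x$-fast-trie stage start already at depth $\log N-\Theta(\log D)$, leaving only $O(\log D)$ levels — and the key structural point that makes it go through is that, whenever the window stage fails, a representative provably separates $v$ from the answer, which is what makes the skipped-to node present in the trie built only on representatives. What remains is routine: fixing the exact depth of $A$ and the window size, verifying that the $L_m$'s, the $F_m$'s, the node fingerprints, the level-ancestor tables, the $x$-fast trie and the fusion-tree buckets together occupy $O(n)$ space, and assembling the final answer from $\fp(S[1,i-1])$ and $\fp(S[1,j])$ via the constant-time fingerprint operations.
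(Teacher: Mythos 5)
Your reduction from the theorem to a single finger predecessor query among the prefix lengths $L_m$ (the paper's $R(m)$) is exactly the paper's: store the prefix fingerprints, the per-node fingerprints, and the level-ancestor structure on the dictionary tree, so that once the index $m_j$ with $L_{m_j-1}<j\le L_{m_j}$ is known the remaining work is $O(1)$, and observe (using $L_{m_i}\ge i$) that $\pred(j)$ is within universe distance $D$ of the finger $L_{m_i}$. Where you diverge is in how the finger predecessor query is answered. The paper proves a clean black-box reduction (its Lemma~4): given \emph{any} predecessor structure with time $t(N,n)$ and space $s(N,n)$, build a balanced tree over the universe whose node at height $h$ carries a predecessor structure over its $2^h$-sized span, jump from the finger's leaf to the ancestor at height $\lceil\log D\rceil$ \emph{and its left neighbour}, and query both; then cut the space by bucketing into $\Theta(\log N)$-element groups with atomic heaps and storing only representatives in the tree. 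Instantiating with van~Emde~Boas gives $O(\log\log D)$ time and $O(n)$ space. You instead hand-roll a modified $y$-fast trie ($x$-fast trie over bucket representatives, fusion trees per bucket) and start the $x$-fast binary search from the finger's trie ancestor at height $\Theta(\log D)$. Both are tree-over-the-universe constructions with the same bucketing trick; the paper's version is more modular (and is flagged in the paper as a contribution in its own right), while yours is arguably a shorter route to this single application.

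There is, however, a concrete gap in your $x$-fast stage. You assert that the ancestor $A$ of $v$'s leaf at height $\Theta(\log D)$ is ``chosen large enough that its subtree \dots\ contains the leaves of both $v$ and $j$ since $|v-j|\le D$.'' This is false: take $v=2^h-1$ and $j=2^h$ with $h$ the height of $A$ --- then $|v-j|=1$ but $v$ and $j$ lie under different height-$h$ nodes, no matter how large you make $h$ relative to $\log D$. As a consequence $A$ need not contain $j$'s leaf, need not contain the separating representative (which could be on $j$'s side of the boundary), and hence need not even be present in the $x$-fast trie built only on representatives, so ``run the ordinary $x$-fast-trie predecessor search restricted to the $O(\log D)$ levels below $A$'' is not justified. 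The repair is the same one the paper makes explicit in its Lemma~4: also take $A$'s sibling-side neighbour at the same height (its right neighbour, since $j>v$), and then argue by cases --- $j$'s leaf lies under $A$ or under that neighbour; exactly one of the two is guaranteed to contain a representative, and if the one containing $j$'s path is absent the answer is the subtree-max of the other, found via the stored min/max pointers. Until this neighbour case is added and analysed, the claimed $O(\log\log D)$ bound for your restricted search does not go through.
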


\noindent Along the way we give a new and simple reduction for solving the finger predecessor problem on integers using any predecessor data structure as a black~box.

In compliance with all related work on grammar compressed strings, we assume that the model of computation is the RAM model with a word size of $\log N$ bits.

% \paragraph{Techniques.}
% \begin{itemize}
% 	\item Our fingerprint data structures are based on the idea that a random access query produces a path from the root to a leaf
% 	\item The concatenation of the substrings produced by the left children of the nodes on this path produces a prefix of the string
% 	\item By storing the fingerprint of the strings produced by each node we can concatenate the fingerprints to get the fingerprint of the prefix instead
% 	\item We combine this with the fast random access data structure by Bille et al. that constructs a heavy path decomposition of the SLP
% 	\item 
% 	\item Then a query is dominated by finding the right node among the children of the root to start from
% 	\item This is the predecessor problem
% 	\item To do fast finger fingerprint queries we introduce a new finger predecessor data structure of possibly independent interest
% 	\item We partition the universe into groups of $\log N$ elements and chose a representative for each group
% 	\item Then we do a hierarchical decomposition of universe and insert the representatives on the appropriate levels
% 	\item Now a query is a predecessor query on the level that spans at most $2\lce$ elements plus a predecessor query in at most $\log N$ elements
% \end{itemize}

%\subsection{Applications}
\subsection{Longest common extension in compressed strings}
%Given our linear space fingerprint data structure we are able to solve many fundamental string problems in $O(n)$ space. 
As an application we show how to efficiently solve the longest common extension problem (LCE).
%In particular we can solve the longest common extension (LCE) problem.
% \begin{itemize}
% 	\item Given our linear space fingerprint data structure we are able to solve many fundamental string problems in $O(n)$ space
% 	\item In particular we can solve the longest common substring problem that is also used to solve other problems on strings
% \end{itemize}
%\subsubsection{Longest common extension in compressed strings.}
Given two indices $i, j$ in a string $S$, the answer to the $\lceq(i, j)$ query is the length $\lce$ of the maximum substring such that $S[i, i+\lce] = S[j, j+\lce]$. The compressed LCE problem is to preprocess a compressed string to support LCE queries. On uncompressed strings this is solvable in $O(N)$ preprocessing time, $O(N)$ space, and $O(1)$ query time with a nearest common ancestor data structure on the suffix tree for $S$ \cite{HT1984}. Other trade-offs are obtained by doing an exponential search over the fingerprints of strings starting in $i$ and $j$ \cite{bille12lce}. Using the exponential search in combination with the previously mentioned methods for obtaining fingerprints without decompressing the entire string we get $O((h+\lce)\log \lce)$ or $O((\log N+\lce)\log \lce)$ time using $O(n)$ space for an LCE query. Using our new (finger) fingerprint data structures and the exponential search we obtain \autoref{thm:lce}.

\begin{theorem}\label{thm:lce}
Let $G$ be an SLP of size $n$ that produces a string $S$ of length $N$. The SLP $G$ can be preprocessed in $O(N)$ time into a Monte Carlo data structure of size $O(n)$ that supports LCE queries on $S$ in
\begin{enumerate}
\item[(i)] $O(\log \ell \log N)$ time
\item[(ii)] $O(\log \lce \log\log \lce + \log\log N)$ time if $G$ is a Linear SLP.
\end{enumerate}
Here $\ell$ denotes the LCE value and queries are answered correctly with high probability. Moreover, a Las Vegas version of both data structures that always answers queries correctly can be obtained with
$O(N^2/n \log N)$ preprocessing time with high probability.
%for general SLPs and $O(N\log N\log\log N)$ expected preprocessing time for linear SLPs.
\end{theorem}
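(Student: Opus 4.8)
The plan is to reduce LCE queries to fingerprint queries via exponential search, then combine this with the fingerprint data structures from Theorems 1 and 2, and finally handle the Monte Carlo versus Las Vegas distinction and preprocessing times. First I would recall the standard observation: $S[i,i+k] = S[j,j+k]$ if and only if $\fp(S[i,i+k]) = \fp(S[j,j+k])$, where the "only if" is always true and the "if" direction holds with high probability provided the fingerprint function is chosen with a sufficiently large prime (polynomial in $N$). Using \autoref{thm:fp}, each such fingerprint comparison costs $O(\log N)$ time (or $O(\log\log N)$ in the Linear SLP case, but see below). To find $\lce$ I would perform an exponential (doubling) search: probe lengths $1,2,4,\dots,2^t,\dots$ until the fingerprints disagree, then binary search in the last interval $[2^{t-1},2^t]$. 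This uses $O(\log \lce)$ fingerprint comparisons, giving $O(\log \lce \log N)$ total time and establishing part (i). For the space bound, the fingerprint structure is $O(n)$ and the fingerprint function itself is $O(1)$ words, so the data structure is $O(n)$.

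For part (ii), the subtlety is that naively plugging in the $O(\log\log N)$ fingerprint bound would only give $O(\log \lce \log\log N)$, which is weaker than the claimed $O(\log \lce \log\log \lce + \log\log N)$. The fix is to use the finger fingerprint data structure of \autoref{thm:ffp}: spend $O(\log\log N)$ once to locate fingers for positions $i$ and $j$ (solving the predecessor problem among the root's children), and thereafter answer each fingerprint query during the exponential/binary search in $O(\log\log D)$ time where $D$ is the current probe length. Since all probe lengths are $O(\lce)$, each of the $O(\log \lce)$ comparisons costs $O(\log\log \lce)$, for a total of $O(\log\log N + \log \lce \log\log \lce)$. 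One has to be slightly careful that advancing a finger by the probe length (to obtain a finger for $i+k$ from a finger for $i$) is supported within the same bound; this follows from the finger-search machinery underlying \autoref{thm:ffp}, since moving a distance $D$ costs $O(\log\log D)$.

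For the preprocessing time and the Monte Carlo/Las Vegas dichotomy: the $O(N)$ preprocessing builds the SLP-based structures of Theorems 1–2 (these are constructible in $O(n)\le O(N)$ time) plus, if needed, a single linear scan to set up the random access / level-ancestor components; choosing the random fingerprint function is $O(1)$. This yields the Monte Carlo data structure, which errs only if two distinct substrings collide, an event of probability $N^{-\Omega(1)}$ over the choice of the function, so queries are correct with high probability. To obtain a Las Vegas structure one must verify that the chosen fingerprint function is collision-free on all relevant substrings; the standard approach is to check that $\fp$ is injective on the set of substrings that can ever be compared — equivalently, to verify there is no pair $p\neq q$ with $\fp(S[p,p+m]) = \fp(S[q,q+m])$ for the relevant lengths. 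I would do this by decompressing $S$ and running a verification that costs $O(N^2/n \cdot \log N)$ time (e.g., grouping substrings by length and sorting/hashing their fingerprints, exploiting that the number of distinct "interesting" lengths that actually arise across all queries is bounded in terms of $n$), retrying with a fresh function if a collision is found; a constant number of retries suffices with high probability.

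The main obstacle I expect is exactly this last step — getting the Las Vegas verification down to $O(N^2/n\,\log N)$ rather than the trivial $O(N^2\log N)$ or $O(N^2)$: one needs a structural argument bounding how many (length, position)-pairs must actually be certified collision-free, using the fact that the SLP has only $n$ nodes so the fingerprints arising in any query decompose into $O(n)$ "canonical" node fingerprints, and only $O(N/n)$-ish representative lengths per node need checking. Everything else (the exponential search, the high-probability correctness bound, plugging in \autoref{thm:fp} and \autoref{thm:ffp}) is routine once this counting is in place.
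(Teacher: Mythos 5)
Your treatment of the Monte Carlo bounds (i) and (ii) is essentially the paper's: reduce to $O(\log \ell)$ fingerprint comparisons via an exponential/doubling search, plug in \autoref{thm:fp} for (i), and use \autoref{thm:ffp} for (ii) so that after the first $O(\log\log N)$-cost query every subsequent comparison is a finger query of distance $O(\ell)$. The only imprecision is that you invoke a raw ``exponential then binary search,'' which as stated would generate comparisons of arbitrary lengths and positions; the paper instead uses a structured reduction (\autoref{lem:lce-comparisons}) that forces every comparison into one of two normal forms, so that the probe lengths are always powers of two and (for the phrase-based case) are anchored at phrase boundaries. For the Monte Carlo bounds this distinction is harmless, but it becomes essential for what follows.

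The genuine gap is in the Las Vegas verification. Your proposed argument --- that the SLP's $n$ nodes give $O(n)$ ``canonical'' fingerprints and only $O(N/n)$-ish ``representative lengths'' per node need certifying --- is not what the paper does and does not hold up: the number of lengths that must be covered is not bounded in terms of $n$ at all (it is $O(\log N)$ powers of two, independent of the grammar), and the positions that must be covered are all $O(N)$ positions, not a grammar-bounded subset. The reason the paper can restrict attention to powers of two is precisely \autoref{lem:lce-comparisons}: taking the phrases to be single characters, every comparison the query algorithm ever makes is a ``Type 2'' comparison of two substrings of some length $2^p$, $p \le \log N$. The verification then runs $\log N$ rounds; in round $p$ it slides a window of length $2^p$ over $S$ in $O(N)$ total time, inserting fingerprints into a dictionary and, upon a fingerprint match, verifying equality in $O(1)$ by comparing the two halves (whose fingerprints were certified in round $p-1$). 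This gives $O(N \log N)$ time with $O(N)$ dictionary space; the $O(N^2/n\,\log N)$ bound arises purely from a space--time trade-off, splitting each round into $O(N/n)$ passes so the dictionary never holds more than $n$ fingerprints. So the $N^2/n$ factor has nothing to do with counting canonical lengths --- it is the cost of re-scanning $S$ once per batch of $n$ window positions. Without the Type 1/Type 2 normalisation you would have to certify collision-freeness for substrings of all lengths, not just powers of two, and the whole scheme breaks.

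One further detail you did not flag: to make the halving step in \autoref{lem:lce-comparisons} produce only Type 1/Type 2 comparisons, the paper compares the two ``half'' substrings indirectly, splitting at the nearest phrase boundary; a direct comparison of the halves would generically be of neither type. This indirection is what keeps the set of substrings-to-be-verified small enough for the sliding-window verification to cover them.
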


% \begin{itemize}
% 	\item Given two indices $i, j$ in a string $S$, the answer to the $\lceq(i, j)$ query is the length $\lce$ of the maximum substring such that $S[i, i+\lce] = S[j, j+\lce]$.
% 	\item The compressed longest common extension (LCE) problem is to preprocess a compressed string to support LCE queries
% 	\item An LCE query takes a pair $(i,j)$ of indices in $\str$ and returns the length of the longest common prefix of the suffixes of $\str$ starting at positions $i$ and $j$
% 	\item 
% 	\item Other trade-offs are obtained by doing an exponential search over the fingerprints of strings starting in $i$ and $j$ \cite{bille2012time}
% 	\item By combining the exponential search over fingerprints method with the previously mentioned methods for obtaining fingerprints we get $O((h+\lce)\log \lce)$ and $O(\lce \log N + \lce^2)$ time using $O(n)$ space for LCE queries
% 	\item Using our new fingerprint data structures and the exponential search we obtain \autoref{thm:lce}.

% \end{itemize}
%% Old LCE theorem (is removed because of duplicate theorem in section 7):
% \begin{theorem}\label{thm:lce}
% 	Given a string of length $N$ compressed into an SLP \slp of size $n$. We can answer $\lceq(i, j)$ queries in:
% 	\begin{enumerate}
% 		\item[(i)] $O(n)$ space and query time $O(\log \lce \log N)$
% 		\item[(ii)] $O(n)$ space and query time $O(\log \lce \log \log \lce + \log \log N)$ if $\slp$ is a Linear SLP
% 	\end{enumerate}
% 	where $\lce$ is the answer to the $\lceq(i, j)$ query.
% \end{theorem}

\noindent We furthermore show how to reduce the Las Vegas preprocessing time to $O(N\log N\log\log N)$ when all the internal nodes in the Linear SLP are children of the root (which is the case in LZ78).

The following corollary follows immediately because an LZ77 compression~\cite{lz77} consisting of $n$ phrases can be transformed to an SLP with $O(n\log\frac{N}{n})$ production rules~\cite{charikar2005smallest,rytter2003application}.

\begin{corollary}
	We can solve the $\lceq$ problem in $O(n \log \frac{N}{n})$ space and query time $O(\log \ell \log N)$ for LZ77 compression.
\end{corollary}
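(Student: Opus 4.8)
The plan is to reduce directly to \autoref{thm:lce}(i) via a grammar transformation, so the argument is essentially a one-line composition of results already in hand. First I would recall the construction of Charikar~et~al.~\cite{charikar2005smallest} and Rytter~\cite{rytter2003application}: given an LZ77 parse of $S$ consisting of $n$ phrases, one builds in polynomial time an SLP $G$ that generates exactly $S$ and has $O(n\log\frac{N}{n})$ production rules. If the construction does not immediately produce a grammar in Chomsky normal form (each nonterminal having exactly two children), a standard binarization of the right-hand sides fixes this while blowing up the number of rules by only a constant factor, so the size bound $O(n\log\frac{N}{n})$ is unchanged.

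Next I would apply \autoref{thm:lce}(i) to $G$. Write $n' = O(n\log\frac{N}{n})$ for the size of $G$. The theorem yields a Monte Carlo data structure of size $O(n') = O(n\log\frac{N}{n})$ that answers $\lceq$ queries on the string generated by $G$ in time $O(\log\ell\log N')$, where $N'$ is the length of that string and $\ell$ is the answer to the query. Since $G$ generates exactly $S$, we have $N' = N$, so the query time is $O(\log\ell\log N)$ and the space is $O(n\log\frac{N}{n})$, as claimed.

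There is essentially no obstacle here: the only points worth checking are that the cited LZ77-to-SLP transformation preserves the generated string (it does, by construction) and that the parameter $N$ appearing in the query bound of \autoref{thm:lce} is the true uncompressed length of the string rather than the grammar size — which is precisely how that theorem is stated. Hence the corollary follows immediately.
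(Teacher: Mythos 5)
Your argument is exactly the paper's: apply the known LZ77-to-SLP transformation of Charikar et al.\ and Rytter to obtain an SLP with $O(n\log\frac{N}{n})$ rules generating $S$, then invoke \autoref{thm:lce}(i) on that grammar. The additional remark about binarizing into Chomsky normal form is a reasonable technical aside, but otherwise this matches the paper's one-line derivation.
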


\noindent Finally, the LZ78 compression can be modelled by a Linear SLP $\lslp$ with constant overhead. Consider an LZ78 compression with $n$ phrases, denoted $r_1, \ldots, r_n$. A terminal phrase corresponds to a leaf in $\lslp$, and each phrase $r_j = (r_i, a)$, $i < j$, corresponds to a node $v \in \lslp$ with $r_i$ corresponding to the left child of $v$ and the right child of $v$ being the leaf corresponding to $a$. Therefore, we get the following corollary.

\begin{corollary}
	We can solve the $\lceq$ problem in $O(n)$ space and query time $O(\log \ell \log \log \ell + \log \log N)$ for LZ78 compression.
\end{corollary}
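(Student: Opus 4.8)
The plan is to turn the LZ78 parse into a Linear SLP and then invoke \autoref{thm:lce}(ii) as a black box. Given the LZ78 compression with phrases $r_1,\dots,r_n$, I would build the Linear SLP $\lslp$ exactly as sketched just above the statement: a terminal phrase becomes a leaf, and a phrase $r_j=(r_i,a)$ with $i<j$ becomes an internal node $v_j$ whose left child is the node representing $r_i$ and whose right child is a leaf labelled $a$; the root is then given $v_1,\dots,v_n$ as its ordered children. Expanding the root concatenates the $n$ phrase strings in order, which is exactly $S$, so $\lslp$ generates $S$ with $|S|=N$; and every non-root internal node has a leaf as its right child, so $\lslp$ is a legitimate Linear SLP in the sense required by \autoref{thm:lce}.

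Next I would check the size bound. There are $n$ phrase nodes, at most $n$ further leaves (one per phrase for its appended character, fewer if leaves for equal characters are shared), and $n$ edges leaving the root, so $|\lslp|=\Theta(n)$. Hence the parameter ``$n$'' in \autoref{thm:lce} is within a constant factor of the number of LZ78 phrases, and feeding $\lslp$ into \autoref{thm:lce}(ii) yields an $O(n)$-space structure answering $\lceq(i,j)$ on $S$ in time $O(\log\lce\log\log\lce+\log\log N)$, which is precisely the claimed bound. Building $\lslp$ from the parse takes $O(n)\subseteq O(N)$ time, so the $O(N)$ preprocessing and the Monte Carlo / Las Vegas guarantees of \autoref{thm:lce} transfer verbatim; only the space and query-time claims are part of the corollary, and since every internal node of $\lslp$ is a child of the root, the faster Las Vegas preprocessing noted after \autoref{thm:lce} applies as well.

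I do not expect a substantive obstacle here: the content lies entirely in \autoref{thm:lce}, and what remains is bookkeeping about the reduction. The one point worth a sentence of care is that $\lslp$ is a DAG, so a phrase node $v_i$ may simultaneously be the left child of several later $v_j$'s and the $i$-th child of the root; it is exactly this sharing that keeps $|\lslp|$ at $O(n)$ rather than $O(N)$. A second, minor point is that a character $a$ appended by a non-terminal phrase need not itself occur as a terminal phrase, so the leaf labelled by such an $a$ must be created explicitly — this does not affect the $O(n)$ node count.
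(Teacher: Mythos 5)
Your proof is correct and follows the paper's approach exactly: model the LZ78 parse as a Linear SLP of size $O(n)$ (terminal phrases as leaves, each $(r_i,a)$ as an internal node with left child the node for $r_i$ and right child the leaf for $a$, all phrase nodes as children of the root) and then invoke \autoref{thm:lce}(ii). The additional bookkeeping you supply — checking the node count, noting the DAG sharing, and observing that appended characters may need fresh leaves — fills in details the paper leaves implicit but does not change the argument.
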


\section{Preliminaries}
%strings
Let $S = S[1, |S|]$ be a string of length $|S|$. Denote by $S[i]$ the character in $S$ at index $i$ and let $S[i, j]$ be the substring of $S$ of length $j - i+1$ from index $i \geq 1$ to $|S| \geq j \geq i$, both indices included.

%slp, linear slp
A Straight Line Program (SLP) $\slp$ is a context-free grammar in Chomsky normal form that we represent as a node-labeled and ordered directed acyclic graph. Each leaf in $\slp$ is labelled with a character, and corresponds to a terminal grammar production rule. Each internal node in $\slp$ is labeled with a nonterminal rule from the grammar. The unique string $S(v)$ of length $\size(v) = |S(v)|$ is \emph{produced} by a depth-first left-to-right traversal of $v \in \slp$ and consist of the characters on the leafs in the order they are visited. We let $\rootnode(\slp)$ denote the root of $\slp$, and $\lchild(v)$ and $\rchild(v)$ denote the left and right child of an internal node $v \in \slp$, respectively.

A Linear SLP $\lslp$ is an SLP where we allow $\rootnode(\lslp)$ to have more than two children. All other internal nodes $v \in \lslp$ have a leaf as $\rchild(v)$. Although similar, this is not the same definition as given for the Relaxed SLP by Claude and Navarro~\cite{claude2011self}. The Linear SLP is more restricted since the right child of any node (except the root) must be a leaf. Any Linear SLP can be transformed into an SLP of at most double size by adding a new rule for each child of the root.

%heavy path decomposition and slps
We extend the classic \emph{heavy path decomposition} of Harel and Tarjan \cite{HT1984} to SLPs as in \cite{bille2011random}. For each node $v \in \slp$, we select one edge from $v$ to a child with maximum size and call it the \emph{heavy edge}. The remaining edges are \emph{light edges}. Observe that $\size(u) \leq \size(v)/2$ if $v$ is a parent of $u$ and the edge connecting them is light. Thus, the number of light edges on any path from the root to a leaf is at most $O(\log N)$. 
A \emph{heavy path} is a path where all edges are heavy. The heavy path of a node $v$, denoted $\heavypath(v)$, is the unique path of heavy edges starting at $v$. Since all nodes only have a single outgoing heavy edge, the heavy path $\heavypath(v)$ and its leaf $\leaf(\heavypath(v))$, is well-defined for each node $v \in \slp$.

%We say that $u \in \heavypath(v)$ if $u$ is a descendant of $v$ reachable by a path of heavy edges, and we call the sequence of such nodes the \emph{heavy path} of a node $v$, denoted $\heavypath(v)$. % The sequence of heavy edges from $u$ to $\leaf(\heavypath(v))$ is a suffix of the sequence of heavy edges from $v$ to $\leaf(\heavypath(v))$ if $u \in \heavypath(v)$.

%predecessor
A \emph{predecessor data structure} supports predecessor and successor queries on a set $R \subseteq U = \{ 0, \ldots, N-1 \}$ of $n$ integers from a universe $U$ of size $N$. The answer to a \emph{predecessor query} $\pred(q)$ is the largest integer $r^- \in R$ such that $r^- \leq q$, while the answer to a \emph{successor query} $\succ(q)$ is the smallest integer $r^+ \in R$ such that $r^+ \geq q$. There exist predecessor data structures achieving a query time of $O(\log \log N)$ using space $O(n)$ \cite{van1976design, mehlhorn1990bounded, willard1983log}.

%level ancestor
Given a rooted tree $T$ with $n$ vertices, we let $\depth(v)$ denote the length of the path from the root of $T$ to a node $v \in T$. A \emph{level ancestor data structure} on $T$ supports \emph{level ancestor queries} $\levelanc(v, i)$, asking for the ancestor $u$ of $v \in T$ such that $\depth(u) = \depth(v)-i$. There is a level ancestor data structure answering queries in $O(1)$ time using $O(n)$ space \cite{dietz1991finding} (see also \cite{berkman1994finding, alstrup2000improved, bender2004level}).

\subsection{Fingerprinting}
The Karp-Rabin fingerprint \cite{karp1987efficient} of a string $x$ is defined as $\fp(x) = \sum_{i=1}^{|x|} x[i] \cdot c^i \bmod p$, where $c$ is a randomly chosen positive integer, and $2N^{c+4} \leq p \leq 4N^{c+4}$ is a prime. Karp-Rabin fingerprints guarantee that given two strings $x$ and $y$, if $x = y$ then $\fp(x) = \fp(y)$. Furthermore, if $x \neq y$, then with high probability $\fp(x) \neq \fp(y)$. Fingerprints can be composed and subtracted as follows.

\begin{lemma}\label{lem:fp}
Let $x = y z$ be a string decomposable into a prefix $y$ and suffix $z$. Let $N$ be the maximum length of $x$, $c$ be a random integer and $2N^{c+4} \leq p \leq 4N^{c+4}$ be a prime. Given any two of the Karp-Rabin fingerprints $\fp(x)$, $\fp(y)$ and $\fp(z)$, it is possible to calculate the remaining fingerprint in constant time as follows:
\begin{align*}
	\fp(x) = \fp(y) \fpplus \fp(z) &= \fp(y) + c^{|y|} \cdot \fp(z) \bmod p \\
	\fp(y) = \fp(x) \fpdelsuffix \fp(z) &= \fp(x) - \frac{c^{|x|}}{c^{|z|}} \cdot \fp(z) \bmod p \\
	\fp(z) = \fp(x) \fpdelprefix \fp(y) &= \frac{\fp(x) - \fp(y)}{c^{|y|}} \bmod p
\end{align*}
\end{lemma}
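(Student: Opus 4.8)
The plan is to verify each of the three identities directly from the definition $\fp(x) = \sum_{i=1}^{|x|} x[i]\cdot c^i \bmod p$, and then observe that since each identity expresses one of the three fingerprints as an explicit arithmetic function of the other two (and of the lengths, which we know), any one can be computed from any other two in constant time on the RAM. So the whole content is the first identity; the other two are algebraic rearrangements of it, valid because $p$ is prime and $c$ is chosen so that $c \bmod p \neq 0$, hence $c$ (and every $c^{|y|}$, $c^{|z|}$) is invertible modulo $p$, so the divisions in the statement are well-defined modular inverses.

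First I would prove $\fp(x) = \fp(y) \fpplus \fp(z)$. Write $x = yz$ with $|y| = m$, so that $x[i] = y[i]$ for $1 \le i \le m$ and $x[m+k] = z[k]$ for $1 \le k \le |z|$. Splitting the defining sum at $i = m$ gives
\begin{align*}
\fp(x) &= \sum_{i=1}^{m} x[i]\, c^i + \sum_{k=1}^{|z|} x[m+k]\, c^{m+k} \bmod p \\
       &= \sum_{i=1}^{m} y[i]\, c^i + c^{m}\sum_{k=1}^{|z|} z[k]\, c^{k} \bmod p \\
       &= \fp(y) + c^{|y|}\cdot \fp(z) \bmod p.
\end{align*}
This is exactly the claimed composition rule. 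The second identity follows by isolating $\fp(y)$: from $\fp(x) = \fp(y) + c^{|y|}\fp(z)$ we get $\fp(y) = \fp(x) - c^{|y|}\fp(z) \bmod p$, and since $|x| = |y| + |z|$ we may rewrite $c^{|y|} = c^{|x|}/c^{|z|} \bmod p$, giving $\fp(y) = \fp(x) - \frac{c^{|x|}}{c^{|z|}}\cdot \fp(z) \bmod p$. The third identity follows by isolating $\fp(z)$: dividing $\fp(x) - \fp(y) = c^{|y|}\fp(z)$ by $c^{|y|}$ yields $\fp(z) = \frac{\fp(x) - \fp(y)}{c^{|y|}} \bmod p$.

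For the constant-time claim I would note that each formula involves only a constant number of additions, subtractions, multiplications, and modular inverses modulo $p$; since $p = O(N^{c+4})$ fits in $O(1)$ machine words under the stated RAM model with word size $\log N$, and a modular inverse can be computed in constant time (e.g.\ via Fermat's little theorem, $a^{-1} \equiv a^{p-2}$, using fast exponentiation on $O(1)$-word integers, or simply by precomputing the relevant powers of $c$ and their inverses), each of the three fingerprints is obtainable from the other two in constant time. There is no real obstacle here — the only point requiring a word of care is the invertibility of the powers of $c$ modulo $p$, which is guaranteed because $p$ is prime and exceeds $c$, so $c \not\equiv 0 \pmod p$; in an actual implementation one also maintains a table of powers $c^{k} \bmod p$ (and their inverses) indexed by length, so that the exponents appearing in the formulas are available in $O(1)$ time.
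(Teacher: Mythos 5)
The paper states this lemma without a proof, treating the identities as standard Karp--Rabin algebra; the only hint it gives is the remark immediately following the lemma that each stored fingerprint also carries its associated exponent $c^{|x|} \bmod p$. Your derivation --- split the defining sum at $i=|y|$ to obtain $\fp(x) = \fp(y) + c^{|y|}\fp(z) \bmod p$, then isolate $\fp(y)$ and $\fp(z)$ using that $c$, and hence every power $c^{k}$, is invertible modulo the prime $p$ --- is exactly the argument the paper implicitly relies on, and it is correct.

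Two cautions on the constant-time claim. First, the Fermat alternative $a^{-1} \equiv a^{p-2} \pmod p$ via fast exponentiation is \emph{not} constant time: here $p$ has $\Theta(\log N)$ bits, so the exponentiation costs $\Theta(\log p) = \Theta(\log N)$ multiplications, even though each multiplication is $O(1)$ on $O(1)$-word operands. Only the ``store the relevant powers and their inverses'' route yields $O(1)$ time, which is precisely why the paper adopts the convention that a fingerprint is the pair $(\fp(x), c^{|x|} \bmod p)$ (with the inverse implicitly maintained as well). Second, for the $O(n)$-space data structures later in the paper, a global table of $c^k \bmod p$ indexed by length $k$ would require up to $N$ entries, which is unaffordable when $N$ is exponential in $n$; storing $c^{|x|}$ only alongside each of the $O(n)$ fingerprints actually kept is what preserves the space bound. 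For the lemma in isolation either bookkeeping works, but it is worth being aware that the paper's convention is the one that survives into the applications.
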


\noindent In order to calculate the fingerprints of \autoref{lem:fp} in constant time, each fingerprint for a string $x$ must also store the associated exponent $c^{|x|} \bmod p$, and we will assume this is always the case. Observe that a fingerprint for any substring $\fp(S[i, j])$ of a string can be calculated by subtracting the two fingerprints for the prefixes $\fp(S[1, i-1])$ and $\fp(S[1, j])$. Hence, we will only show how to find fingerprints for prefixes in this paper.

\section{Basic fingerprint queries in SLPs}
We now describe a simple data structure for answering $\fingerprintq(1, i)$ queries for a string $S$ compressed into a SLP $\slp$ in time $O(h)$, where $h$ is the height of the parse tree for $S$. This method does not unpack the string to obtain the fingerprint, instead the fingerprint is generated by traversing $\slp$. 

% The data structure stores $\size(v)$ and fingerprint $\fp(S(v))$ of the string produced by each node $v \in \slp$. We compose the fingerprint $f = \fp(S[1, i])$ by traversing $\slp$ in depth-first left-to-right order, adding the fingerprints of substrings to $f$ using \autoref{lem:fp}. 

The data structure stores $\size(v)$ and the fingerprint $\fp(S(v))$ of the string produced by each node $v \in \slp$. To compose the fingerprint $f = \fp(S[1, i])$ we start from the root of \slp and do the following. Let $v'$ denote the currently visited node, and let $p=0$ be a variable denoting the size the concatenation of strings produced by left children of visited nodes. We follow an edge to the right child of $v'$ if $p+\size(\lchild(v')) < i$, and follow a left edge otherwise. If following a right edge, update $f = f \fpplus \fp(S(\lchild(v')))$ such that the fingerprint of the full string generated by the left child of $v'$ is added to $f$, and set $p = p+\size(\lchild(v'))$. When following a left edge, $f$ and $p$ remains unchanged. When a leaf is reached, let $f = f \fpplus \fp(S(v'))$ to include the fingerprint of the terminal character. Aside from the concatenation of fingerprints for substrings, this procedure resembles a random access query for the character in position $i$ of $S$.

The procedure correctly composes $f = \fp(S[1, i])$ because the order in which the fingerprints for the substrings are added to $f$ is identical to the order in which the substrings are decompressed when decompressing $S[1, i]$. 

Since the fingerprint composition takes constant time per addition, the time spent generating a fingerprint using this method is bounded by the height of the parse tree for $S[i]$, denoted $O(h)$. Only constant additional space is spent for each node in $\slp$, so the space usage is $O(n)$.

\section{Faster fingerprints in SLPs}
Using the data structure of Bille et al.~\cite{bille2011random} to perform random access queries allows for a faster way to answer $\fingerprintq(1, i)$ queries.

%Bille et al. \cite{bille2011random} show the data structure of \autoref{lem:slp:random} for answering random access queries in SLPs. Using their data structure to perform random access queries allows for a faster way to answer $\fingerprintq(1, i)$ queries without unpacking the string. The basic idea is to compose the final fingerprint by traversing $\slp$, while only performing a constant number of fingerprint additions per heavy path visited.

\begin{lemma}[\cite{bille2011random}]\label{lem:slp:random}
Let $S$ be a string of length $N$ compressed into a SLP $\slp$ of size $n$. Given a node $v \in \slp$, we can support random access in $S(v)$ in $O(\log (\size(v)))$ time, at the same time reporting the sequence of heavy paths and their entry- and exit points in the corresponding depth-first traversal of $\slp(v)$.
\end{lemma}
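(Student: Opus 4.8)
\noindent The plan is to guide a random access for position $i$ in $S(v)$ by the heavy path decomposition of $\slp$ set up in the preliminaries, so that the query visits one heavy path after another and spends on each heavy path only $O(1)$ plus a logarithm of the ratio between the size of the subtree it enters and the size of the subtree it next descends into; since consecutive such sizes telescope along the root-to-leaf descent, the total will be $O(\log\size(v))$.

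First I would prepare the data. The heavy edges --- one out of every non-leaf node of $\slp$, going to its heavy child --- form a forest $F$ on the vertices of $\slp$ whose roots are the leaves of $\slp$; in $F$, moving from a node $u$ towards its root traces the heavy path $\heavypath(u)$ down to $\leaf(\heavypath(u))$. Fix a node $u_0$ with $\heavypath(u_0)=u_0,u_1,\ldots,u_k$, $u_k=\leaf(\heavypath(u_0))$, and for $t<k$ let $w_t$ be the light child of $u_t$, a \emph{left} light child if $w_t=\lchild(u_t)$ and a \emph{right} light child otherwise. Unfolding the depth-first traversal shows $S(u_0)=L\cdot S(u_k)\cdot R$, where $L$ concatenates the $S(w_t)$ over the left light children in increasing order of $t$ and $R$ concatenates the $S(w_t)$ over the right light children in decreasing order of $t$. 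I would store $\size(u)$ for every node, and by one pass over $F$ the values $|L|$ and $|R|$ for the heavy path of every node, all in $O(n)$ space.

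One step of the query is then: with the current offset $i$ into $S(u_0)$, compare it with $|L|$ and $|L|+1$. If $i=|L|+1$, the answer is the character of the leaf $u_k$, and we stop. If $i\le|L|$, the target lies in the left light child $w_t$ of the lowest ancestor $u_t$ of $u_0$ in $F$ for which the accumulated left-light subtree size along the upward path from $u_0$ reaches $i$; this is a \emph{weighted ancestor} query in $F$ with left-light subtree sizes as node weights, and the case $i>|L|+1$ is the analogous query for the right-light sizes. Having located $w_t$, we report the current heavy path with entry point $u_0$ and exit point $u_t$, replace $i$ by the induced offset within $S(w_t)$, and recurse from $w_t$, which is the top of $\heavypath(w_t)$. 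Correctness is immediate: $S(u_0)=L\cdot S(u_k)\cdot R$ is precisely the left-to-right order in which leaves of $\slp(u_0)$ are produced, so the iterated steps reach the correct leaf and the reported pairs are exactly the heavy paths crossed.

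The crux, which I expect to be the main obstacle, is to answer each weighted ancestor query on $F$ in \emph{biased} time $O\bigl(1+\log(\size(u_0)/\size(w_t))\bigr)$ using only $O(n)$ total space, rather than in a flat $\Theta(\log N)$ per query: the total weight along $\heavypath(u_0)$ is $|L|+|R|+1=\size(u_0)$ and $w_t$ carries weight $\size(w_t)$, so a search structure over the heavy path that is weight-balanced by these sizes attains this bound, and one must organise these structures over $F$ so that the space stays linear even though distinct heavy paths of the DAG share their lower portions. Granting the biased bound, write $v=v_0,v_1,\ldots$ for the successive entry points; step $j$ costs $O\bigl(1+\log(\size(v_j)/\size(v_{j+1}))\bigr)$, each step crosses one light edge and $\size$ at least halves across a light edge, so there are $O(\log\size(v))$ steps and their logarithmic contributions sum to at most $\log\size(v)$; hence the query runs in $O(\log\size(v))$ time and everything fits in $O(n)$ space. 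A plain binary search per heavy path would give only $O(\log^2 N)$, which is why the weighting by subtree size, the telescoping, and the space-efficient organisation over $F$ are all needed.
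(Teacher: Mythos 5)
This lemma is imported verbatim from Bille et al.~\cite{bille2011random}; the paper you are reading cites it but contains no proof of it, so there is no in-paper argument to compare against. Your reconstruction is nevertheless a faithful account of how the cited paper proves it: forming the forest $F$ of heavy edges (their ``heavy path suffix forest''), the decomposition $S(u_0) = L \cdot S(u_k) \cdot R$ with $L$ and $R$ assembled from the left- and right-hanging light children along $\heavypath(u_0)$, reducing one step of the descent to a weighted search along the heavy path, and then telescoping $\sum_j \log\bigl(\size(v_j)/\size(v_{j+1})\bigr)$ together with the $O(\log \size(v))$ bound on the number of light edges crossed --- all of this matches the cited construction, and your accounting (size at least halves across every light edge, hence both the step count and the telescoped sum are $O(\log\size(v))$) is correct.

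The one step you explicitly grant --- answering the weighted ancestor query on $F$ in biased time $O\bigl(1 + \log(\size(u_0)/\size(w_t))\bigr)$ while keeping total space $O(n)$ even though distinct heavy paths of the DAG share long suffixes --- is precisely the technical heart of~\cite{bille2011random}. They resolve it by building interval-biased search trees over the heavy path suffix forest so that each shared suffix is represented once; a naive per-heavy-path biased tree would indeed blow up the space, exactly as you anticipate. So as a self-contained proof your write-up has a genuine hole at that point, but you have located it correctly, stated precisely what is needed of it, and everything built around it is sound; filling the hole requires reproducing the biased-search-over-a-forest construction from the cited paper rather than any new idea.
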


\noindent The main idea is to compose the final fingerprint from substring fingerprints by performing a constant number of fingerprint additions per heavy path visited.

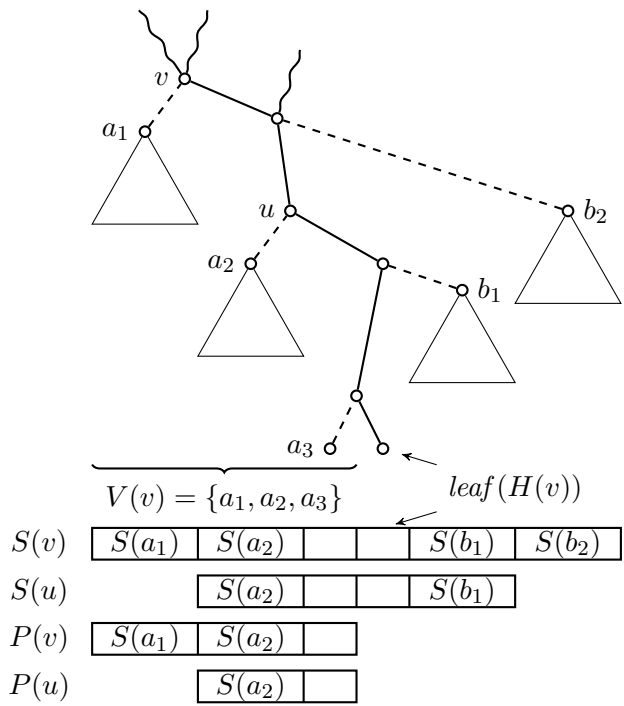
\begin{figure}[tb]
	
	\begin{center}
	\begin{tikzpicture}[x=1,y=1,-,>=stealth',auto, thick,
every label/.style={inner sep=3pt},
	tnode/.style={draw=black, fill=white, circle, inner sep=0pt, minimum size=4pt},
  	terminal/.style={circle,fill=white, inner sep=0.2pt},
  	box/.style={minimum height=11,minimum width=40, inner sep=1},
  	subtree/.style={thin,draw}]

% Define the spacing between the bottom bars.
\def\spacing{18}

\draw [] (60,-6) rectangle (120,6);
\foreach \x in {100}{
	\draw [] (\x,-6) -- (\x,6);
}

\draw [] (20,-6+1*\spacing) rectangle (120,6+1*\spacing);
\foreach \x in {60,100}{
	\draw [] (\x,-6+1*\spacing) -- (\x,6+1*\spacing);
}

\draw [] (20,-6+3*\spacing) rectangle (220,6+3*\spacing);
\foreach \x in {100,120,140}{
	\draw [] (\x,-6+2*\spacing) -- (\x,6+2*\spacing);
}

\draw [] (60,-6+2*\spacing) rectangle (180,6+2*\spacing);
\foreach \x in {60,100,120,140,180}{
	\draw [] (\x,-6+3*\spacing) -- (\x,6+3*\spacing);
}

\node[minimum width=40] at (0,2*\spacing) {$S(u)$};
\node[box] at (80,2*\spacing) {$S(a_2)$};
%\node[box] at (120,2*\spacing) {$S(a_3)$};
% \node[box] at (160,3*\spacing) {$A_4$};
% \node[box] at (200,3*\spacing) {$A_5$};
\node[box,minimum width=20] at (230,2*\spacing) {};
\node[box] at (160,2*\spacing) {$S(b_1)$};

\node[minimum width=40] at (0,3*\spacing) {$S(v)$};
\node[box] at (40,3*\spacing) {$S(a_1)$};
\node[box] at (80,3*\spacing) {$S(a_2)$};
%\node[box] at (120,3*\spacing) {$S(a_3)$};
% \node[box] at (160,2*\spacing) {$A_4$};
% \node[box] at (200,2*\spacing) {$A_5$};
\node[box,minimum width=20] at (230,3*\spacing) {};
\node[box] at (160,3*\spacing) {$S(b_1)$};
\node[box] at (200,3*\spacing) {$S(b_2)$};

\node[minimum width=40] at (0,\spacing) {$P(v)$};
\node[box] at (40,\spacing) {$S(a_1)$};
\node[box] at (80,\spacing) {$S(a_2)$};
%\node[box] at (120,\spacing) {$S(a_3)$};
% \node[box] at (160,\spacing) {$A_4$};
% \node[box] at (200,\spacing) {$A_5$};

\node[minimum width=40] at (0,0) {$P(u)$};
\node[box] at (80,0) {$S(a_2)$};
%\node[box] at (120,0) {$S(a_3)$};
% \node[box] at (160,0) {$A_4$};
% \node[box] at (200,0) {$A_5$};

% Define the nodes in the tree

\def\nodeyoffset{10}

\node[tnode, label=left:$a_1$] (a) at (40, 200+\nodeyoffset) {};
\node[tnode, label=left:$a_2$] (b) at (80, 150+\nodeyoffset) {};
\node[tnode, label=left:$a_3$] (c) at (110, 80+\nodeyoffset) {};
\node[tnode, label=right:$b_2$] (d) at (200, 170+\nodeyoffset) {};
\node[tnode, label=right:$b_1$] (e) at (160, 140+\nodeyoffset) {};

\def\subtreeheight{35}

\path[subtree] (a) -- ($(a)+(-20,-\subtreeheight)$) -- ($(a)+(20,-\subtreeheight)$) -- (a);
\path[subtree] (b) -- ($(b)+(-20,-\subtreeheight)$) -- ($(b)+(20,-\subtreeheight)$) -- (b);
%\path[subtree] (c) -- ($(c)+(-20,-\subtreeheight)$) -- ($(c)+(20,-\subtreeheight)$) -- (c);
\path[subtree] (d) -- ($(d)+(-20,-\subtreeheight)$) -- ($(d)+(20,-\subtreeheight)$) -- (d);
\path[subtree] (e) -- ($(e)+(-20,-\subtreeheight)$) -- ($(e)+(20,-\subtreeheight)$) -- (e);

% \node[] (an) at ($(a)-(0,0.5*\subtreeheight+3)$) {$A_1$};
% \node[] (bn) at ($(b)-(0,0.5*\subtreeheight+3)$) {$A_2$};
% \node[] (cn) at ($(c)-(0,0.5*\subtreeheight+3)$) {$A_3$};
% \node[] (dn) at ($(d)-(0,0.5*\subtreeheight+3)$) {$B_2$};
% \node[] (en) at ($(e)-(0,0.5*\subtreeheight+3)$) {$B_1$};

\node[tnode, label=left:$v$] (f) at ($(a)+(15, 20)$) {};
\node[tnode] (g) at ($(d)+(-110, 35)$) {};
\node[tnode, label=left:$u$] (h) at ($(b)+(15, 20)$) {};
\node[tnode] (i) at ($(e)+(-30, 10)$) {};
\node[tnode] (j) at ($(c)+(10, 20)$) {};
\node[tnode] (k) at ($(j)+(10, -20)$) {};

\path[draw] (f) -- (g) -- (h) -- (i) -- (j) -- (k);
\path[draw, dashed] (a) -- (f);
\path[draw, dashed] (b) -- (h);
\path[draw, dashed] (c) -- (j);
\path[draw, dashed] (d) -- (g);
\path[draw, dashed] (e) -- (i);

\draw[decorate,decoration={brace,raise=0pt,amplitude=3pt}] (120,84) -- (20,84) node[midway,label={[label distance=-4]below:$V(v)=\{a_1,a_2,a_3\}$}] {};

\node[] (leaf) at ($(k)+(50,-15)$) {$\leaf(H(v))$};
\path[->,>=stealth',auto,thin,shorten >=5] (leaf) edge [] (k);
\path[->,>=stealth',auto,thin,shorten >=5] (leaf) edge [] (130,6+3*\spacing);

\node[] (d1) at ($(f)+(-20,30)$) {};
\node[] (d2) at ($(f)+(10,30)$) {};
\node[] (d3) at ($(g)+(10,30)$) {};

\path[draw,decorate,decoration={snake,amplitude=0.3mm,segment length=4mm,pre length=0mm, post length=0mm}] (d1) -- (f);
\path[draw,decorate,decoration={snake,amplitude=0.3mm,segment length=4mm,pre length=0mm, post length=0mm}] (d2) -- (f);
\path[draw,decorate,decoration={snake,amplitude=0.3mm,segment length=4mm,pre length=0mm, post length=0mm}] (d3) -- (g);

\end{tikzpicture}	
	\end{center}
	\caption{Figure showing how $S(v)$ and its prefix $\leftstring(v)$ is composed of substrings generated by the left children $a_1, a_2, a_3$ and right children $b_1, b_2$ of the heavy path $\heavypath(v)$. Also illustrates how this relates to $S(u)$ and $\leftstring(u)$ for a node $u \in \heavypath(v)$.\label{fig:slp:strings}}
\end{figure}

In order to describe the data structure, we will use the following notation. Let $\leftnodes(v)$ be the left children of the nodes in $\heavypath(v)$ where the heavy path was extended to the right child, ordered by increasing depth. The order of nodes in $V(v)$ is equal to the sequence in which they occur when decompressing $S(v)$, so the concatenation of the strings produced by nodes in $V(v)$ yields the prefix $P(v)=S(v)[1,L(v)]$, where $\leftsize(v) = \sum_{u \in \leftnodes(v)} \size(u)$. Observe that $\leftstring(u)$ is a suffix of $\leftstring(v)$ if $u \in \heavypath(v)$. See \autoref{fig:slp:strings} for the relationship between $u$, $v$ and the defined strings.
 
Let each node $v \in \slp$ store its unique outgoing heavy path $\heavypath(v)$, the length $\leftsize(v)$, $\size(v)$, and the fingerprints $\fp(\leftstring(v))$ and $\fp(S(v))$. By forming heavy path trees of total size $O(n)$ as in \cite{bille2011random}, we can store $\heavypath(v)$ as a pointer to a node in a heavy path tree (instead of each node storing the full sequence).

The fingerprint $f = \fp(S[1, i])$ is composed from the sequence of heavy paths visited when performing a single random access query for $S[i]$ using \autoref{lem:slp:random}. Instead of adding all left-children of the path towards $S[i]$ to $f$ individually, we show how to add all left-children hanging from each visited heavy path in constant time per heavy path. Thus, the time taken to compose $f$ is $O(\log N)$. 

More precisely, for the pair of entry- and exit-nodes $v, u$ on each heavy path $H$ traversed from the root to $S[i]$, we set $f = f \fpplus (\fp(\leftstring(v)) \fpdelsuffix \fp(\leftstring(u))$ (which is allowed because $\leftstring(u)$ is a suffix of $\leftstring(v)$). If we leave $u$ by following a right-pointer, we additionally set $f = f \fpplus \fp(S(\lchild(u)))$. If $u$ is a leaf, set $f = f \fpplus \fp(S(u))$ to include the fingerprint of the terminal character. 

Remember that $\leftstring(v)$ is exactly the string generated from $v$ along $H$, produced by the left children of nodes on $H$ where the heavy path was extended to the right child. Thus, this method corresponds exactly to adding the fingerprint for the substrings generated by all left children of nodes on $H$ between the entry- and exit-nodes in depth-first order, and the argument for correctness from the slower fingerprint generation also applies here.

Since the fingerprint composition takes constant time per addition, the time spent generating a fingerprint using this method is bounded by the number of heavy paths traversed, which is $O(\log N)$. Only constant additional space is spent for each node in $\slp$, so the space usage is $O(n)$. This concludes the proof of \autoref{thm:fp}(i).

\section{Faster fingerprints in Linear SLPs}
In this section we show how to quickly answer $\fingerprintq(1, i)$ queries on a Linear SLP $\lslp$. In the following we denote the sequence of $k$ children of $\rootnode(\lslp)$ from left to right by $r_1, \ldots, r_k$. Also, let $R(j) = \sum_{m=1}^j \size(r_m)$ for $j = 0, \ldots, k$. That is, $R(j)$ is the length of the prefix of $S$ produced by $\lslp$ including $r_j$ (and $R(0)$ is the empty prefix).

We also define the dictionary tree $F$ over $\lslp$ as follows. Each node $v \in \lslp$ corresponds to a single vertex $v^F \in F$. There is an edge $(u^F, v^F)$ labeled $c$ if $u = \lchild(v)$ and $c = S(\rchild(v))$. If $v$ is a leaf, there is an edge $(\rootnode(F), v^F)$ labeled $S(v)$. That is, a left child edge of $v \in \lslp$ is converted to a parent edge of $v^F \in F$ labeled like the right child leaf of $v$. Note that for any node $v \in \lslp$ except the root, producing $S(v)$ is equivalent to following edges and reporting edge labels on the path from $\rootnode(F)$ to $v^F$. Thus, the prefix of length $a$ of $S(v)$ may be produced by reporting the edge labels on the path from $\rootnode(F)$ until reaching the ancestor of $v^F$ at depth $a$.

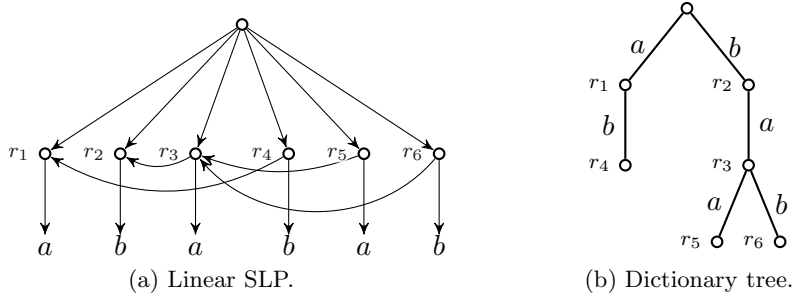
\begin{figure}[tb]
\begin{center}
\subfloat[Linear SLP.]{
\pgfdeclarelayer{background layer}
\pgfdeclarelayer{foreground layer}
\pgfsetlayers{background layer,main,foreground layer}
\begin{tikzpicture}[->,>=stealth',auto, thick,
every label/.style={rectangle, rounded corners, font=\scriptsize, line width=5pt, inner sep=1pt, fill=white, fill opacity=0.9, text opacity=1.0},
	main node/.style={draw=black, fill=white, circle, inner sep=0pt, minimum size=4pt},
  	terminal/.style={rectangle,text height=7pt, fill=white, inner sep=0.2pt}]

  % \node[main node] (root) [] {};
  % \node[main node, label=left:$r_1$, label=above right:$\fp(a)$, label=right:$\fp(a)$] (1) [below left=1.6cm and 3.0cm of root] {};
  % \node[main node, label=left:$r_2$,  label=above right:$\fp(ab)$, label=right:$\fp(b)$] (2) [below left=1.6cm and 1.5cm of root] {};
  % \node[main node, label=left:$r_3$,  label=above right:$\fp(abba)$, label=right:$\fp(ba)$] (3) [below left=1.6cm and 0.0cm of root] {};
  % \node[main node, label=left:$r_4$,  label=above right:$\fp(abbaab)$, label=right:$\fp(ab)$] (4) [below right=1.6cm and 1.5cm of root] {};
  % \node[main node, label=left:$r_5$,  label=above right:$\fp(abbaabbaa)$, label=right:$\fp(baa)$] (5) [below right=1.6cm and 3.0cm of root] {};
  
\node[main node] (root) [] {};
  \node[main node, label=left:$r_1$] (1) [below left=1.6cm and 2.5cm of root] {};
  \node[main node, label=left:$r_2$] (2) [below left=1.6cm and 1.5cm of root] {};
  \node[main node, label=left:$r_3$] (3) [below left=1.6cm and 0.5cm of root] {};
  \node[main node, label=left:$r_4$] (4) [below right=1.6cm and 0.5cm of root] {};
  \node[main node, label=left:$r_5$] (5) [below right=1.6cm and 1.5cm of root] {};
  \node[main node, label=left:$r_6$] (6) [below right=1.6cm and 2.5cm of root] {};

  \node[terminal] (a1) [below=1cm of 1] {$a$};
  \node[terminal] (a2) [below=1cm of 3] {$a$};
  \node[terminal] (a3) [below=1cm of 5] {$a$};
  % \node[terminal] (a4) [below=0.6cm of 7] {$a$};
  \node[terminal] (b1) [below=1cm of 2] {$b$};
  \node[terminal] (b2) [below=1cm of 4] {$b$};
  \node[terminal] (b3) [below=1cm of 6] {$b$};
  % \node[terminal] (b3) [below=0.6cm of 6] {$b$};
  
  \begin{pgfonlayer}{background layer}
  \path[] (root) edge [] (1);
  \path[] (root) edge [] (2);
  \path[] (root) edge [] (3);
  \path[] (root) edge [] (4);
  \path[] (root) edge [] (5);
  \path[] (root) edge [] (6);

  \path[] (1) edge [] (a1);
  \path[] (3) edge [] (a2);
  \path[] (5) edge [] (a3);
  \path[] (2) edge [] (b1);
  \path[] (4) edge [] (b2);
  \path[] (6) edge [] (b3);

  \path[] (3) edge [bend left] (2);
  \path[] (4) edge [bend left] (1);
  \path[] (5) edge [bend left=20] (3);
  \path[] (6) edge [bend left=50] (3);

\end{pgfonlayer}{background layer}
\end{tikzpicture}
}\quad\quad\quad\quad\subfloat[Dictionary tree.]{
\begin{tikzpicture}[-,>=stealth',auto, thick,
every label/.style={rectangle, font=\scriptsize, inner sep=3pt},
	main node/.style={draw=black, fill=white, circle, inner sep=0pt, minimum size=4pt},
  	terminal/.style={circle,fill=white, inner sep=0.2pt}]

  \node[main node] (root) [] {};
  \node[main node, label=left:$r_1$] (1) [below left=0.9cm and 0.7cm of root] {};
  \node[main node, label=left:$r_2$] (2) [below right=0.9cm and 0.7cm of root] {};
  \node[main node, label=left:$r_3$] (3) [below=0.9cm of 2] {};
  \node[main node, label=left:$r_4$] (4) [below=0.9cm of 1] {};
  \node[main node, label=left:$r_5$] (5) [below left=0.9cm and 0.3cm of 3] {};
  \node[main node, label=left:$r_6$] (6) [below right=0.9cm and 0.3cm of 3] {};

  \path[] (root) edge [] node [left] {$a$} (1);
  \path[] (root) edge [] node [right] {$b$} (2);
  \path[] (1) edge [] node [left] {$b$} (4);
  \path[] (3) edge [] node [left] {$a$} (5);
  \path[] (2) edge [] node [right] {$a$} (3);
  \path[] (3) edge [] node [right] {$b$} (6);
\end{tikzpicture}	
}

\caption{A Linear SLP compressing the string \texttt{abbaabbaabab} and the dictionary tree obtained from the Linear SLP.}\label{fig:linslpex1}

\end{center}
\end{figure}

The data structure stores a predecessor data structure over the prefix lengths $R(j)$ and the associated node $r_j$ and fingerprint $\fp(S[1, R(j)])$ for $j = 0, \ldots, k$. We also have a doubly linked list of all $r_j$'s with bidirectional pointers to the predecessor data structure and $\lslp$.  We store the dictionary tree $F$ over $\lslp$, augment it with a level ancestor data structure, and add bidirectional pointers between $v \in \lslp$ and $v^F \in F$. Finally, for each node $v \in \lslp$, we store the fingerprint of the string it produces, $\fp(S(v))$. 

A query $\fingerprintq(1, i)$ is answered as follows. Let $R(m)$ be the predecessor of $i$ among $R(0), R(1), \ldots, R(k)$. Compose the answer to $\fingerprintq(1, i)$ from the two fingerprints $\fp(S[1, R(m)])\fpplus\fp(S[R(m)+1, i])$. The first fingerprint $\fp(S[1, R(m)])$ is stored in the data structure and the second fingerprint $\fp(S[R(m)+1, i])$ can be found as follows. Observe that $S[R(m)+1, i]$ is fully generated by $r_{m+1}$ and hence a prefix of $S(r_{m+1})$ of length $i-R(m)$. We can get $r_{m+1}$ in constant time from $r_m$ using the doubly linked list. We use a level ancestor query $u^F = \levelanc(r_{m+1}^F, i-R(m))$ to determine the ancestor of $r_{m+1}^F$ at depth $i-R(m)$, corresponding to a prefix of $r_{m+1}$ of the correct length. From $u_F$ we can find $\fp(S(u)) = \fp(S[R(m)+1, i])$.

It takes constant time to find $\fp(S[R(m)+1, i])$ using a single level ancestor query and following pointers. Thus, the time to answer a query is bounded by the time spent determining $\fp(S[1, R(m)])$, which requires a predecessor query among $k$ elements (i.e. the number of children of $\rootnode(\lslp)$) from a universe of size $N$. The data structure uses $O(n)$ space, as there is a bijection between nodes in $\lslp$ and vertices in $F$, and we only spend constant additional space per node in $\lslp$ and vertex in $F$. This concludes the proof of \autoref{thm:fp}(ii).

% A full description of the finger predecessor data structure given here. May be shuffled as necessary.
\section{Finger fingerprints in Linear SLPs}
The $O(\log\log N)$ running time of a $\fingerprintq(1, i)$ query is dominated by having to find the predecessor $R(m)$ of $i$ among $R(0), R(1), \ldots, R(k)$. Given $R(m)$ the rest of the query takes constant time. In the following, we show how to improve the running time of a $\fingerprintq(1, i)$ query to $O(\log \log |j-i|)$ given a finger for position $j$.  %to $R(j)$ for some $j \in \{0, \ldots, k\}$. 
Recall that a finger $f$ for  a position $j$ is a pointer to the node $r_m$ producing $S[j]$.
To achieve this, we present a simple linear space finger predecessor data structure that is interchangeable with any other predecessor data structure.

%\paragraph{Finger Predecessor}
\subsection{Finger Predecessor}
Let $R \subseteq U = \{ 0, \ldots, N-1 \}$ be a set of $n$ integers from a universe $U$ of size $N$. Given a finger $f \in R$ and a query point $q \in U$, the \emph{finger predecessor problem} is to answer finger predecessor or successor queries in time depending on the universe distance $D = |f-q|$ from the finger to the query point. %We now show how, given any predecessor data structure, we can use it to solve the finger predecessor problem.
Belazzougui et al.~\cite{fingerpred} present a succinct solution for solving the finger predecessor problem relying on a modification of z-fast tries. Here, we use a simple reduction for solving the finger predecessor problem using any predecessor data structure as a black box. 

\begin{lemma}\label{lem:fingerpred}
	Let $R \subseteq U = \{ 0, \ldots, N-1 \}$ be a set of $n$ integers from a universe $U$ of size $N$. 
	Given a predecessor data structure with query time $t(N, n)$ using $s(N, n)$ space,
	we can solve the finger predecessor problem in time $O(t(D, n))$ using space $O(s(N, \frac{n}{\log N}) \log N)$.
\end{lemma}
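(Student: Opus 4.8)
The plan is to reduce the finger predecessor problem to an ordinary predecessor query over a universe of size $O(D)$, at the cost of a logarithmic factor in space. The key observation is a standard trick: partition the universe $U$ into $N/\log N$ consecutive blocks, each of size $\log N$. For each block that contains at least one element of $R$, build an ordinary predecessor structure on the (at most $\log N$) elements of $R$ lying in that block, expressed as offsets within the block; this uses $O(s(\log N, \log N))$ space per nonempty block, and there are at most $\min(n, N/\log N)$ nonempty blocks, for $O(s(\log N, \log N)\cdot n)$ total — but it is cleaner to instead build a single global predecessor structure on the set $R' \subseteq \{0,\dots,N/\log N - 1\}$ of block indices that are nonempty, giving $O(s(N/\log N, n))$ space, plus the per-block local structures. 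I would state the decomposition so that the claimed bound $O(s(N, \tfrac{n}{\log N})\log N)$ comes out: use the global structure on block-indices (a universe of size $N/\log N$ and at most $n$ points — or $n/\log N$ distinct nonempty blocks in the worst case of spread-out points, which is the source of the $\tfrac{n}{\log N}$ in the statement), and within each nonempty block store the block's elements in a sorted array (or a tiny predecessor structure of size linear in its contents), the arrays summing to $O(n)$ and each contributing the $\log N$ factor only through the word-packing/size bookkeeping.

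Next I would describe how a finger predecessor query with finger $f$ and query point $q$, $D = |f - q|$, is answered. First compare $f$ and $q$ to decide direction; assume $q \le f$ (the other case is symmetric using successor). We want $\pred(q)$. Walk from the block containing $f$ toward the block containing $q$: since $D = |f-q|$, the query point lies at most $D/\log N + 1$ blocks away, so the block containing $q$ can be located in $O(1)$ time by arithmetic on block indices. Then do a local search within that block's sorted list for the predecessor of $q$'s offset; if it exists we are done. If the block containing $q$ has no element of $R$ at or below $q$'s offset, we must find the largest nonempty block with index strictly less than that of $q$'s block — and crucially, that block has index at least (block index of $f$) $-\, O(D/\log N)$, so rather than a global predecessor query we can afford to scan, or we restrict the global query to a window of size $O(D/\log N + 1)$ around the finger. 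Running the global block-level predecessor structure restricted to such a window — e.g. by first truncating to the window via arithmetic and then a single query — costs $t(O(D/\log N), n) = O(t(D,n))$, and the final local search in that block costs $t(O(\log N), \log N) = O(t(D, n))$ as well since $D \ge 1$ and $t$ is nondecreasing. Summing, the query time is $O(t(D,n))$ as claimed.

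The main obstacle — and the step that needs the most care — is making the "restrict the global query to a window of size $O(D/\log N)$" step legitimate with a black-box predecessor structure, since a generic predecessor structure does not support queries relativized to a sub-universe. I would handle this by the standard two-level scheme: store, alongside the global structure, enough information (a doubly linked list of nonempty blocks in sorted order, with pointers from each block to its list node, exactly as the paper does for the $r_j$'s) so that from the finger's block one can walk the list; but walking could be $\Theta(D/\log N)$ steps, which is too slow unless $D/\log N$ is itself small — so instead I would keep a hierarchy of such structures, or more simply observe that we only need the global predecessor structure on block indices and query it directly with $q$'s block index: the answer is a nonempty block index $b^-$, and we then argue $f$'s block index minus $b^-$ is $O(D/\log N + 1)$ because every integer strictly between them is in an empty block, hence contributes $\log N$ unoccupied universe positions, and there are at most $D$ such positions between $f$ and $q$. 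That argument shows the global query is answered in time $t(N/\log N, n)$, and we must finally check $t(N/\log N, n) = O(t(D, n))$: this does not hold in general (e.g. if $D$ is tiny), so the genuinely correct reduction keeps an exponential hierarchy of predecessor structures at scales $2^0, 2^1, \dots, 2^{\log\log N}$ blocks around each possible finger — or, as the paper likely does, it suffices to note that when $D < \log N$ the whole query is local and costs $O(t(\log N, \log N)) = O(t(D,n))$ only if $t$ behaves well on small inputs, so one more small fix (charging $t(D,n) \ge t(1,1) = \Omega(1)$ and $D \ge 1$) closes it. I would present the clean version: one global structure on the $\le n/\log N$ nonempty-block indices over universe $N/\log N$, per-block sorted lists of total size $O(n)$, the window argument bounding the block-distance by $O(D/\log N)$, and — the one subtlety to flag explicitly — the need to truncate the universe seen by the global structure to that window, achieved by pre-storing $\log N$ shifted copies (or a van-Emde-Boas-style recursion) so that $s(N,\tfrac{n}{\log N})$ becomes $s(N,\tfrac{n}{\log N})\log N$, matching the claim.
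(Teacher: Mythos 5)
Your plan is a genuinely different decomposition from the paper's, and it has a gap that your own writeup flags but does not close. You partition the universe $U$ into $N/\log N$ consecutive blocks of size $\log N$, build a single global predecessor structure over the nonempty block indices, and keep a small structure per block. As you correctly observe, a black-box predecessor structure built over the $N/\log N$ block indices is always queried over that full universe: its query time is $t(N/\log N, \cdot)$, which is not $O(t(D,n))$ when $D$ is small relative to $N$ (e.g. for van Emde Boas, $t(N/\log N,\cdot) = \Theta(\log\log N)$ while $t(D,n) = O(\log\log D)$). Your three candidate fixes do not rescue this. An ``exponential hierarchy of scales around each possible finger'' stored per finger position would cost $\Omega(N\log N)$ space since there are $N$ possible finger values; ``$\log N$ shifted copies'' of a single global structure does not change the universe size it is queried over; and the $D < \log N$ charging argument only helps in the local case and leaves the regime $D \geq \log N$ untouched. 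There is also a secondary issue: partitioning $U$ (rather than $R$) into blocks of size $\log N$ gives up to $\min(n, N/\log N)$ nonempty blocks, not $n/\log N$, so the claimed $s(N, n/\log N)$ factor does not come out; and a per-block sorted array answers in $O(\log\log N)$ time, which can exceed $t(D,n)$ for small $D$.

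The idea you are circling but not quite landing on is the one the paper uses: build a complete balanced binary tree $T$ over the universe $U$ whose leaves are the integers $0,\dots,N-1$, discard subtrees with no marked leaves, and at \emph{every} internal node at height $j$ store an independent copy of the black-box predecessor structure over the marked leaves in that node's subtree, which is a universe of size $O(2^j)$. Equip $T$ with a constant-time level ancestor structure and left/right neighbour pointers. From the leaf $f_l$ for the finger $f$, jump to its ancestor $v$ at height $h = \lceil \log|f-q| \rceil$ (and the left or right neighbour $v_L$ of $v$); these two vertices span $q$ and store predecessor structures over universes of size $O(2^h) = O(D)$, so querying them costs $O(t(D,n))$. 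Each element of $R$ lies in $O(\log N)$ ancestor subtrees, hence is stored $O(\log N)$ times, giving $O(s(N,n)\log N)$ space. The paper then reduces space by grouping $R$ (not $U$) by rank into exactly $O(n/\log N)$ consecutive runs of $O(\log N)$ elements, storing each run in an atomic heap (which, unlike a sorted array, gives $O(1)$-time predecessor on sets of size $O(\log N)$), and placing only the run representatives into the tree $T$. This yields the stated $O(s(N, n/\log N)\log N)$ space with $O(t(D,n))$ query time. Your block-level intuition is close in spirit to this second stage, but without the tree of predecessor structures at every height the central time bound $O(t(D,n))$ does not follow.
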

\begin{proof}
Construct a complete balanced binary search tree $T$ over the universe $U$. The leaves of $T$ represent the integers in $U$, and we say that a vertex \emph{span} the range of $U$ represented by the leaves in its subtree. Mark the leaves of $T$ representing the integers in $R$. We remove all vertices in $T$ where the subtree contains no marked vertices. Observe that a vertex at height $j$ span a universe range of size $O(2^j)$. We augment $T$ with a level ancestor data structure answering queries in constant time. Finally, left- and right-neighbour pointers are added for all nodes in $T$.

Each internal node $v \in T$ at height $j$ store an instance of the given predecessor data structure for the set of marked leaves in the subtree of $v$. The size of the universe for the predecessor data structure equals the span of the vertex and is $O(2^j)$\footnote{The integers stored by the data structure may be shifted by some constant $k \cdot 2^j$ for a vertex at height $j$, but we can shift all queries by the same constant and thus the size of the universe is $2^j$.}.

Given a finger $f \in R$ and a query point $q \in U$, we will now describe how to find both $\succ(q)$ and $\pred(q)$ when $q < f$. The case $q > f$ is symmetric. 
Observe that $f$ corresponds to a leaf in $T$, denoted $f_l$. We answer a query by determining the ancestor $v$ of $f_l$ at height $h = \lceil \log(|f - q|) \rceil$ and its left neighbour $v_L$ (if it exists). We query for $\succ(q)$ in the predecessor data structures of both $v$ and $v_L$, finding at least one leaf in $T$ (since $v$ spans $f$ and $q < f$). We return the leaf representing the smallest result as $\succ(q)$ and its left neighbour in $T$ as $\pred(q)$.

Observe that the predecessor data structures in $v$ and $v_L$ each span a universe of size $O(2^h) = O(|f-q|) = O(D)$. All other operations performed take constant time. Thus, for a predecessor data structure with query time $t(N,n)$, we can answer finger predecessor queries in time $O(t(D, n))$.

The height of $T$ is $O(\log N)$, and there are $O(n \log N)$ vertices in $T$ (since vertices spanning no elements from $R$ are removed). Each element from $R$ is stored in $O(\log N)$ predecessor data structures. Hence, given a predecessor data structure with space usage $s(N, n)$, the total space usage of the data structure is $O(s(N, n) \log N)$. 

We reduce the size of the data structure by reducing the number of elements it stores to $O(\frac{n}{\log N})$. This is done by partitioning $R$ into $O(\frac{n}{\log N})$ sets of consecutive elements $R_i$ of size $O(\log N)$. We choose the largest integer in each $R_i$ set as the representative $g_i$ for that set, and store that in the data structure described above. We store the integers in set $R_i$ in an atomic heap \cite{fredmanwillardfusion, Hagerup1998} capable of answering predecessor queries in $O(1)$ time and linear space for a set of size $O(\log N)$. Each element in $R$ keep a pointer to the set $R_i$ it belongs to, and each set left- and right-neighbour pointers.

Given a finger $f \in R$ and a query point $q \in U$, we describe how to determine $\pred(q)$ and $\succ(q)$ when $q < f$. The case $q > f$ is symmetric. We first determine the closest representatives $g_l$ and $g_r$ on the left and right of $f$, respectively. Assuming $q < g_l$, we proceed as before using $g_l$ as the finger into $T$ and query point $q$. This gives $p = \pred(q)$ and $s = \succ(q)$ among the representatives. If $g_l$ is undefined or $g_l < q < f \leq g_r$, we select $p = g_l$ and $s = g_r$.
To produce the final answers, we perform at most 4 queries in the atomic heaps that $p$ and $s$ are representatives for. 

All queries in the atomic heaps take constant time, and we can find $g_l$ and $g_r$ in constant time by following pointers. If we query a predecessor data structure, we know that the range it spans is $O(|g_l - q|) = O(|f-q|) = O(D)$ since $q < g_l < f$. Thus, given a predecessor data structure with query time $t(N, n)$, we can solve the finger predecessor problem in time $O(t(D, n))$.

The total space spent on the atomic heaps is $O(n)$ since they partition $R$. The number of representatives is $O(\frac{n}{\log N})$. Thus, given a predecessor data structure with space usage $s(N, n)$, we can solve the finger predecessor problem in space $O(s(N, \frac{n}{\log N}) \log N)$.\qed
\end{proof}

\noindent Using the %linear space 
van Emde Boas predecessor data structure \cite{van1976design, mehlhorn1990bounded, willard1983log} with $t(N, n) = O(\log \log N)$ query time using $s(N, n) = O(n)$ space, we %get a solution to the finger predecessor problem with $O(\log \log D)$ query time and $O(\frac{n}{\log N} \log N) = O(n)$ space usage. We thus 
obtain the following corollary.

\begin{corollary}\label{cor:fingerpred}
	Let $R \subseteq U = \{ 0, \ldots, N-1 \}$ be a set of $n$ integers from a universe $U$ of size $N$. 
	Given a finger $f \in R$ and a query point $q \in U$, we can solve the finger predecessor problem in time $O(\log \log |f - q|)$ and space $O(n)$.%, where $D = |f - q|$. 
\end{corollary}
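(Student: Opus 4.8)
The plan is to instantiate \autoref{lem:fingerpred} with a concrete predecessor structure and then simplify the two resulting bounds. First I would recall the linear-space variant of the van Emde Boas data structure (\cite{van1976design, mehlhorn1990bounded, willard1983log}): it supports predecessor and successor queries on a set of $n$ integers from a universe of size $N$ in time $t(N,n) = O(\log\log N)$ using space $s(N,n) = O(n)$. I would then feed exactly these two functions into the black-box reduction of \autoref{lem:fingerpred}, so no new algorithmic idea is needed.

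For the query time, the reduction yields $O(t(D,n)) = O(\log\log D) = O(\log\log|f-q|)$, which is the claimed bound. The point worth flagging — but which is already established inside \autoref{lem:fingerpred} — is that the two predecessor structures actually queried (at the ancestor $v$ of $f_l$ at height $\lceil\log|f-q|\rceil$ and its left neighbour $v_L$) each span a universe of size $O(2^h) = O(D)$, so the $\log\log$ is of $D$ rather than of $N$. For the space, the reduction gives $O(s(N, \frac{n}{\log N})\log N)$; substituting $s(N,m) = O(m)$ turns this into $O(\frac{n}{\log N}\cdot\log N) = O(n)$, as claimed. I would remark that the preliminary bucketing step in \autoref{lem:fingerpred} (partitioning $R$ into $O(n/\log N)$ buckets of size $O(\log N)$, each served by an atomic heap) is exactly what makes this cancellation go through; without it the space would carry an extra $\log N$ factor.

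There is essentially no obstacle here: the corollary is a direct specialization of \autoref{lem:fingerpred}. The only minor points requiring care are (i) citing a version of van Emde Boas attaining $O(n)$ space rather than the naive $O(N)$ space, and (ii) observing that the universe parameter governing the time bound is $D$, not $N$ — and both are already accounted for by the statement of \autoref{lem:fingerpred}, so the proof is a one-line substitution followed by arithmetic.
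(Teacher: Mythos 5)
Your proposal matches the paper's proof exactly: both instantiate \autoref{lem:fingerpred} with the linear-space van Emde Boas predecessor data structure ($t(N,n) = O(\log\log N)$, $s(N,n)=O(n)$) and observe that the space bound $O(s(N,n/\log N)\log N)$ collapses to $O(n)$. Nothing to add.
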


\subsection{Finger Fingerprints}
We can now prove Theorem~\ref{thm:ffp}. Assume wlog that we have a finger for $i$, i.e., we  are given a finger $f$ to the node $r_m$ generating $S[i]$. From this we can in constant time get a pointer to $r_{m+1}$ in the doubly linked list and from this a pointer to $R(m+1)$ in the predecessor data structure. If $R(m+1) > j$ then $R(m)$ is the predecessor of $j$. Otherwise,  using Corollary~\ref{cor:fingerpred} we can in time $O(\log\log |R(m+1)-j|)$ find the predecessor of $j$. Since $R(m+1) \geq i$ and the rest of the query takes constant time, the total time for the query is $O(\log \log |i-j|)$.

\section{Longest Common Extensions in Compressed Strings}
Given an SLP $\slp$, the longest common extension (LCE) problem is to build a data structure for $\slp$ that supports longest common extension queries $LCE(i,j)$. In this section we show how to use our fingerprint data structures as a tool for doing LCE queries and hereby obtain \autoref{thm:lce}.

%Assume for now that we can compare any pair of substrings of $S$ correctly by using the fingerprinting data structure of \autoref{thm:fp}. We show later how to remove this assumption by verifying that the fingerprinting function is collision-free.

\subsection{Computing Longest Common Extensions with Fingerprints}
%We start by showing the following general lemma about computing Longest Common Extensions by comparing exponential length substrings. The lemma is implicit in~\cite{bille12lce}.
%\begin{lemma}\label{lem:lce-comparisons}
%For any string $S$, $\ell=\lceq(i,j)$ can be found by comparing $O(\log \ell)$ pairs of substrings of $S$ for equality. Moreover, all substrings compared have length $2^p$ for some $p=0,\ldots,\log(\ell)+1$.
%\end{lemma}
%\begin{proof}
%We first use $p=O(\log \ell)$ comparisons to determine the biggest integer $p$ such that $S[i , i+2^p] = S[j , j+2^p]$. It follows that $\ell \in [2^p, 2^{p+1}]$. Now let $q<2^p$ denote the length of the longest common prefix of the substrings $x=S[i+2^p+1 , i+2^{p+1}]$ and $y=S[j+2^p+1 , i+2^{p+1}]$, both of length $2^p$. Clearly, $\ell = 2^p + q$. By comparing the first half of $x$ to the first half of $y$, we can determine if $q \in [0,2^{p-1}]$ or $q \in [2^{p-1}+1,2^p-1]$. By recursing we obtain the exact value of $q$ after $\log 2^p = O(\log \ell)$ comparisons.\qed
%\end{proof}

%\begin{comment} %A general version of the lemma that also captures the case of the modified search for Linear SLPs.
We start by showing the following general lemma that establishes the connection between LCE and fingerprint queries.

\begin{lemma}\label{lem:lce-comparisons}
For any string $S$ and any partition $S=s_1 s_2 \cdots s_t$ of $S$ into $k$ non-empty substrings called phrases, $\ell=\lceq(i,j)$ can be found by comparing $O(\log \ell)$ pairs of substrings of $S$ for equality. Furthermore, all substring comparisons $x=y$ are of one of the following two types:
\begin{description}
\item[Type 1] Both $x$ and $y$ are fully contained in (possibly different) phrase substrings.
\item[Type 2] $|x|=|y|=2^p$ for some $p=0,\ldots,\log(\ell)+1$ and for $x$ or $y$ it holds that
\begin{enumerate}
\item[(a)] The start position is also the start position of a phrase substring, or
\item[(b)] The end position is also the end position of a phrase substring.
\end{enumerate}
\end{description}
\end{lemma}
\begin{proof}
Let a position of $S$ be a \emph{start} (\emph{end}) position if a phrase starts (ends) at that position. Moreover, let a comparison of two substrings be of \emph{type 1} (\emph{type 2}) if it satisfies the first (second) property in the lemma. We now describe how to find $\ell = \lceq(i,j)$ by using $O(\log \ell)$ type 1 or 2 comparisons.

If $i$ or $j$ is not a start position, we first check if $S[i,i+k] = S[j,j+k]$ (type 1), where $k \geq 0$ is the minimum integer such that $i+k$ or $j+k$ is an end position. If the comparison fails, we have restricted the search for $\ell$ to two phrase substrings, and we can find the exact value using $O(\log \ell)$ type 1 comparisons.

Otherwise, $\lceq(i,j) = k + \lceq(i+k+1,j+k+1)$ and either $i+k+1$ or $j+k+1$ is a start position. This leaves us with the task of describing how to answer $\lceq(i,j)$, assuming that either $i$ or $j$ is a start position.

We first use $p=O(\log \ell)$ type 2 comparisons to determine the biggest integer $p$ such that $S[i , i+2^p] = S[j , j+2^p]$. It follows that $\ell \in [2^p, 2^{p+1}]$. Now let $q<2^p$ denote the length of the longest common prefix of the substrings $x=S[i+2^p+1 , i+2^{p+1}]$ and $y=S[j+2^p+1 , j+2^{p+1}]$, both of length $2^p$. Clearly, $\ell = 2^p + q$. By comparing the first half $x'$ of $x$ to the first half $y'$ of $y$, we can determine if $q \in [0,2^{p-1}]$ or $q \in [2^{p-1}+1,2^p-1]$. By recursing we obtain the exact value of $q$ after $\log 2^p = O(\log \ell)$ comparisons.

However, comparing $x'=S[a_1,b_1]$ and $y'=S[a_2,b_2]$ directly is not guaranteed to be of type 1 or 2. To fix this, we compare them indirectly using a type 1 and type 2 comparison as follows. Let $k < 2^p$ be the minimum integer such that $b_1-k$ or $b_2-k$ is a start position. If there is no such $k$ then we can compare $x'$ and $y'$ directly as a type 1 comparison. Otherwise, it holds that $x' = y'$ if and only if $S[b_1-k,b_1] = S[b_2-k,b_2]$ (type 1) and $S[a_1-k-1,b_1-k-1] = S[a_2-k-1,b_2-k-1]$ (type 2).
\qed
\end{proof}
%\end{comment} % End general lemma

\noindent \autoref{thm:lce} follows by using fingerprints to perform the substring comparisons. In particular, we obtain a Monte Carlo data structure that can answer a LCE query in $O(\log\ell \log N)$ time for SLPs and in $O(\log \ell \log\log N)$ time for Linear SLPs. In the latter case, we can use \autoref{thm:ffp} to reduce the query time to $O(\log\ell\log\log\ell + \log\log N)$ by observing that for all but the first fingerprint query, we have a finger into the data structure.

%\autoref{thm:lce}(i) follows immediately from combining \autoref{lem:lce-comparisons} with the fingerprinting data structure of \autoref{thm:fp}(i). Combining \autoref{lem:lce-comparisons} and \autoref{thm:fp}(ii) leads to an LCE data structure with query time $O(\log \ell \log\log N)$ for Linear SLPs. We can reduce the query time to $O(\log\ell\log\log\ell + \log\log N)$ and obtain \autoref{thm:lce}(ii) as follows.

\subsection{Verifying the Fingerprint Function}
Since the data structure is Monte Carlo, there may be collisions among the fingerprints used to determine the LCE, and consequently the answer to a query may be incorrect. We now describe how to obtain a Las Vegas data structure that always answers LCE queries correctly. We do so by showing how to efficiently verify that the fingerprint function $\fp$ is $\emph{good}$, i.e., collision-free on all substrings compared in the computation of $\lceq(i,j)$. We give two verification algorithms. One that works for LCE queries in SLPs, and a faster one that works for Linear SLPs where all internal nodes are children of the root (e.g. LZ78).
%The verification algorithm for SLPs has running time $O(N^2/n \log N)$ and uses $O(n)$ space. It uses the fingerprints of substrings of size $2^{p-1}$ to verify fingerprints of substrings of size $2^p$ similarly to the verification algorithm in \cite{bille12lce}. For Linear SLPs where all internal nodes are children of the root, the running time is reduced to $O(N\log N\log\log N)$ while using $O(n)$ space. 

%%%%%%%%%%%%%%%%%%%%%%%%%%%%%%%%%%%%%%%%%%%%%%%%%%%%%%%
% Algorithm descriptions for full version of the paper.
%

\subsubsection{SLPs}
If we let the phrases of $S$ be its individual characters, we can assume that all fingerprint comparisons are of type 2 (see \autoref{lem:lce-comparisons}). We thus only have to check that $\fp$ is collision-free among all substrings of length $2^p,p=0,\ldots,\log N$. We verify this in $\log N$ rounds. In round $p$ we maintain the fingerprint of a sliding window of length $2^p$ over $S$. For each substring $x$ we insert $\fp(x)$ into a dictionary. If the dictionary already contains a fingerprint $\fp(y)=\fp(x)$, we verify that $x=y$ in constant time by checking if $\fp(x[1,2^{p-1}]) = \fp(y[1,2^{p-1}])$ and $\fp(x[2^{p-1}+1,2^p]) = \fp(y[2^{p-1}+1,2^{p}])$. This works because we have already verified that the fingerprinting function is collision-free for substrings of length $2^{p-1}$. Note that we can assume that for any fingerprint $\fp(x)$ the fingerprints of the first and last half of $x$ are available in constant time, since we can store and maintain these at no extra cost. In the first round $p=0$, we check that $x=y$ by comparing the two characters explicitly. If $x \neq y$ we have found a collision and we abort and report that $\fp$ is not good. If all rounds are successfully verified, we report that $\fp$ is good.

For the analysis, observe that computing all fingerprints of length $2^p$ in the sliding window can be implemented by a single traversal of the SLP parse tree in $O(N)$ time. Thus, the algorithm correctly decides whether $\fp$ is good in $O(N\log N)$ time and $O(N)$ space. We can easily reduce the space to $O(n)$ by carrying out each round in $O(N/n)$ iterations, where no more than $n$ fingerprints are stored in the dictionary in each iteration. So, alternatively, $\fp$ can be verified in $O(N^2/n \log N)$ time and $O(n)$ space.

%\begin{comment} %Linear SLP Verification
\subsubsection{Linear SLPs} In Linear SLPs where all internal nodes are children of the root, we can reduce the verification time to $O(N\log N\log\log N)$, while still using $O(n)$ space. To do so, we use \autoref{lem:lce-comparisons} with the partition of $S$ being the root substrings. We verify that $\fp$ is collision-free for type 1 and type 2 comparisons separately.

\paragraph{Type 1 Comparisons.}
We carry out the verification in rounds. In round $p$ we check that no collisions occur among the $p$-length substrings of the root substrings as follows: We traverse the SLP maintaining the fingerprint of all $p$-length substrings. For each substring $x$ of length $p$, we insert $\fp(x)$ into a dictionary. If the dictionary already contains a fingerprint $\fp(y) = \fp(x)$ we verify that $x=y$ in constant time by checking if $x[1] = y[1]$ and $\fp(x[2,|x|]) = \fp(y[2,|y|])$ (type 1).

Every substring of a root substring ends in a leaf in the SLP and is thus a suffix of a root substring. Consequently, they can be generated by a bottom up traversal of the SLP. The substrings of length 1 are exactly the leaves. Having generated the substrings of length $p$, the substrings of length $p+1$ are obtained by following the parents left child to another root node and prepending its right child. In each round the $p$ length substrings correspond to a subset of the root nodes, so the dictionary never holds more than $n$ fingerprints. Furthermore, since each substring is a suffix of a root substring, and the root substrings have at most $N$ suffixes in total, the algorithm will terminate in $O(N)$ time.

\paragraph{Type 2 Comparisons.}
We adopt an approach similar to that for SLPs and verify $\fp$ in $O(\log N)$ rounds. In round $p$ we store the fingerprints of the substrings of length $2^p$ that start or end at a phrase boundary in a dictionary. We then slide a window of length $2^p$ over $S$ to find the substrings whose fingerprint equals one of those in the dictionary. Suppose the dictionary in round $p$ contains the fingerprint $\fp(y)$, and we detect a substring $x$ such that $\fp(x)=\fp(y)$. To verify that $x=y$, assume that $y$ starts at a phrase boundary (the case when it ends in a phrase boundary is symmetric). As before, we first check that the first half of $x$ is equal to the first half of $y$ using fingerprints of length $2^{p-1}$, which we know are collision-free. Let $x'=S[a_1,b_1]$ and $y'=S[a_2,b_2]$ be the second half of $x$ and $y$. Contrary to before, we can not directly compare $\fp(x')=\fp(y')$, since neither $x'$ nor $y'$ is guaranteed to start or end at a phrase boundary. Instead, we compare them indirectly using a type 1 and type 2 comparison as follows: Let $k < 2^{p-1}$ be the minimum integer such that $b_1-k$ or $b_2-k$ is a start position. If there is no such $k$ then we can compare $x'$ and $y'$ directly as a type 1 comparison. Otherwise, it holds that $x' = y'$ if and only if $\fp(S[b_1-k,b_1]) = \fp(S[b_2-k,b_2])$ (type 1) and $\fp(S[a_1-k-1,b_1-k-1]) = \fp(S[a_2-k-1,b_2-k-1])$ (type 2), since we have already verified that $\fp$ is collision-free for type 1 comparisions and type 2 comparisions of length $2^{p-1}$.

The analysis is similar to that for SLPs. The sliding window can be implemented in $O(N)$ time, but for each window position we now need $O(\log\log N)$ time to retrieve the fingerprints, so the total time to verify $\fp$ for type 2 collisions becomes $O(N \log N \log\log N)$. The space is $O(n)$ since in each round the dictionary stores at most $O(n)$ fingerprints.

\bibliographystyle{abbrv}

%\nocite{*}
\bibliography{references}

\begin{thebibliography}{10}

\bibitem{alstrup2000improved}
S.~Alstrup and J.~Holm.
\newblock Improved algorithms for finding level ancestors in dynamic trees.
\newblock In {\em Proc. 27th ICALP}, pages 73--84, 2000.

\bibitem{amir1992efficient}
A.~Amir, M.~Farach, and Y.~Matias.
\newblock Efficient randomized dictionary matching algorithms.
\newblock In {\em Proc. 3rd CPM}, pages 262--275, 1992.

\bibitem{andoni2006efficient}
A.~Andoni and P.~Indyk.
\newblock Efficient algorithms for substring near neighbor problem.
\newblock In {\em Proc. 17th SODA}, pages 1203--1212, 2006.

\bibitem{fingerpred}
D.~Belazzougui, P.~Boldi, and S.~Vigna.
\newblock Predecessor search with distance-sensitive query time.
\newblock {\em arXiv:1209.5441}, 2012.

\bibitem{bender2004level}
M.~Bender and M.~Farach-Colton.
\newblock The level ancestor problem simplified.
\newblock {\em Theoret. Comput. Sci.}, 321:5--12, 2004.

\bibitem{berkman1994finding}
O.~Berkman and U.~Vishkin.
\newblock Finding level-ancestors in trees.
\newblock {\em J. Comput. System Sci.}, 48(2):214--230, 1994.

\bibitem{bille12lce}
P.~Bille, I.~L. G{\o}rtz, B.~Sach, and H.~W. Vildh{\o}j.
\newblock Time-space trade-offs for longest common extensions.
\newblock In {\em Proc. 23rd CPM}, pages 293--305, 2012.

\bibitem{bille2011random}
P.~Bille, G.~Landau, R.~Raman, K.~Sadakane, S.~Satti, and O.~Weimann.
\newblock Random access to grammar-compressed strings.
\newblock In {\em Proc. 22nd SODA}, pages 373--389, 2011.

\bibitem{charikar2005smallest}
M.~Charikar, E.~Lehman, D.~Liu, R.~Panigrahy, M.~Prabhakaran, A.~Sahai, and
  A.~Shelat.
\newblock The smallest grammar problem.
\newblock {\em IEEE Trans. Inf. Theory}, 51(7):2554--2576, 2005.

\bibitem{claude2011self}
F.~Claude and G.~Navarro.
\newblock Self-indexed grammar-based compression.
\newblock {\em Fundamenta Informaticae}, 111(3):313--337, 2011.

\bibitem{cole2003faster}
R.~Cole and R.~Hariharan.
\newblock Faster suffix tree construction with missing suffix links.
\newblock {\em SIAM J. Comput.}, 33(1):26--42, 2003.

\bibitem{cormode2005substring}
G.~Cormode and S.~Muthukrishnan.
\newblock Substring compression problems.
\newblock In {\em Proc. 16th SODA}, pages 321--330, 2005.

\bibitem{cormode2007string}
G.~Cormode and S.~Muthukrishnan.
\newblock The string edit distance matching problem with moves.
\newblock {\em ACM Trans. Algorithms}, 3(1):2, 2007.

\bibitem{dietz1991finding}
P.~F. Dietz.
\newblock Finding level-ancestors in dynamic trees.
\newblock In {\em Proc. 2nd WADS}, pages 32--40, 1991.

\bibitem{farach1998string}
M.~Farach and M.~Thorup.
\newblock String matching in {L}empel--{Z}iv compressed strings.
\newblock {\em Algorithmica}, 20(4):388--404, 1998.

\bibitem{fredmanwillardfusion}
M.~L. Fredman and D.~E. Willard.
\newblock Surpassing the information theoretic bound with fusion trees.
\newblock {\em J. Comput. System Sci.}, 47(3):424--436, 1993.

\bibitem{gasieniec1996randomized}
L.~G\c{a}sieniec, M.~Karpinski, W.~Plandowski, and W.~Rytter.
\newblock Randomized efficient algorithms for compressed strings: {T}he
  finger-print approach.
\newblock In {\em Proc. 7th CPM}, pages 39--49, 1996.

\bibitem{gasieniec2005real}
L.~G\c{a}sieniec, R.~Kolpakov, I.~Potapov, and P.~Sant.
\newblock Real-time traversal in grammar-based compressed files.
\newblock In {\em Proc. 15th DCC}, page 458, 2005.

\bibitem{Hagerup1998}
T.~Hagerup.
\newblock Sorting and searching on the {W}ord {RAM}.
\newblock In {\em Proc. 15th STACS}, pages 366--398, 1998.

\bibitem{HT1984}
D.~Harel and R.~E. Tarjan.
\newblock Fast algorithms for finding nearest common ancestors.
\newblock {\em SIAM J. Comput.}, 13(2):338--355, 1984.

\bibitem{kalai2002efficient}
A.~Kalai.
\newblock Efficient pattern-matching with don't cares.
\newblock In {\em Proc. 13th SODA}, pages 655--656, 2002.

\bibitem{karp1987efficient}
R.~M. Karp and M.~O. Rabin.
\newblock Efficient randomized pattern-matching algorithms.
\newblock {\em IBM J. Res. Dev.}, 31(2):249--260, 1987.

\bibitem{mehlhorn1990bounded}
K.~Mehlhorn and S.~N{\"a}her.
\newblock Bounded ordered dictionaries in {$O(\log \log N)$} time and {$O(n)$}
  space.
\newblock {\em Inform. Process. Lett.}, 35(4):183--189, 1990.

\bibitem{porat2009exact}
B.~Porat and E.~Porat.
\newblock Exact and approximate pattern matching in the streaming model.
\newblock In {\em Proc. 50th FOCS}, pages 315--323, 2009.

\bibitem{rytter2003application}
W.~Rytter.
\newblock Application of {L}empel--{Z}iv factorization to the approximation of
  grammar-based compression.
\newblock {\em Theoret. Comput. Sci.}, 302(1):211--222, 2003.

\bibitem{van1976design}
P.~van Emde~Boas, R.~Kaas, and E.~Zijlstra.
\newblock Design and implementation of an efficient priority queue.
\newblock {\em Theory Comput. Syst.}, 10(1):99--127, 1976.

\bibitem{willard1983log}
D.~Willard.
\newblock Log-logarithmic worst-case range queries are possible in space
  {$\Theta$(N)}.
\newblock {\em Inform. Process. Lett.}, 17(2):81--84, 1983.

\bibitem{lz77}
J.~Ziv and A.~Lempel.
\newblock A universal algorithm for sequential data compression.
\newblock {\em Information Theory, IEEE Trans. Inf. Theory}, 23(3):337--343,
  1977.

\bibitem{lz78}
J.~Ziv and A.~Lempel.
\newblock Compression of individual sequences via variable-rate coding.
\newblock {\em Information Theory, IEEE Trans. Inf. Theory}, 24(5):530--536,
  1978.

\end{thebibliography}
%\section*{Appendix}
%\input{appendix}

\end{document}